\UseRawInputEncoding
\pdfoutput=1
\documentclass[10pt,pra,aps,twocolumn]{revtex4-1}
\usepackage{amsmath,bbm}
\usepackage{amsthm}
\usepackage{latexsym}
\usepackage{amssymb}
\usepackage{float}
\usepackage{graphicx}           
\usepackage{color}
\usepackage{mathpazo}
\usepackage{relsize}
\usepackage{braket}
\usepackage[colorlinks=true,linkcolor=blue,citecolor=magenta,urlcolor=blue]{hyperref}
\usepackage{changes}
\usepackage{cancel}
\usepackage{xcolor}

\usepackage[utf8]{inputenc}
\usepackage[T1]{fontenc}

\newcommand{\be}{\begin{equation}} 
\newcommand{\ee}{\end{equation}}
\newcommand{\beq}{\begin{eqnarray}}
\newcommand{\eeq}{\end{eqnarray}}

\def\squareforqed{\hbox{\rlap{$\sqcap$}$\sqcup$}}
\def\qed{\ifmmode\squareforqed\else{\unskip\nobreak\hfil
\penalty50\hskip1em\null\nobreak\hfil\squareforqed
\parfillskip=0pt\finalhyphendemerits=0\endgraf}\fi}
\def\endenv{\ifmmode\;\else{\unskip\nobreak\hfil
\penalty50\hskip1em\null\nobreak\hfil\;
\parfillskip=0pt\finalhyphendemerits=0\endgraf}\fi}

\newcommand{\I}{\mathbbm{1}}

\newcommand{\ra}{\rangle}
\newcommand{\la}{\langle}

\newcommand{\cH}{{\cal H}}

\makeatletter
\newtheorem*{rep@theorem}{\rep@title}
\newcommand{\newreptheorem}[2]{%
\newenvironment{rep#1}[1]{%
 \def\rep@title{#2 \ref{##1}}%
 \begin{rep@theorem}}%
 {\end{rep@theorem}}}
\makeatother

\def\tr{\mbox{tr}}
\newtheorem{thm}{Theorem}
\newreptheorem{thm}{Theorem}
\newtheorem{lemma}{Lemma}

\newtheorem{coro}{Corollary}
\newtheorem{defi}{Definition}

\begin{document}

\title{Scalable noncontextuality inequalities and certification of multiqubit quantum systems}


\author{Rafael Santos}
\author{Chellasamy Jebarathinam}
\author{Remigiusz Augusiak}
\affiliation{Center for Theoretical Physics, Polish Academy of Sciences, Aleja Lotnik\'{o}w 32/46, 02-668 Warsaw, Poland}

\begin{abstract}
We propose a family of noncontextuality inequalities
and show that they can be used for certification of multiqubit quantum systems.
Our scheme, unlike those based on non-locality, does not require spatial separation between the subsystems, yet it makes use of certain compatibility relations between measurements. Moreover, it is scalable in the sense that the number of expectation values that are to be measured to observe contextuality scales polynomially with the number of qubits that are being certified. In a particular case we also show our scheme to be robust to errors and experimental imperfections. Our results seem promising as far as certification of physical set-ups is concerned in scenarios  where spatial separation between the subsystems cannot be guaranteed. 

\end{abstract}

\maketitle

\section{Introduction}

Multiqubit entangled states constitute a key resource for various quantum information tasks such as quantum  computation \cite{RB01,KMN+07} and error correction \cite{Sho95,Ter15},  quantum communication \cite{BPS+20,HBE21}, quantum simulations \cite{LWG+10,MDN+11}, and cryptographic protocols \cite{GKB+07,EKM+17}. To realize genuine quantum technologies employing such tasks, the back-end user should be guaranteed that the quantum devices work as specified by the provider. The standard state verification schemes based on quantum tomography \cite{TNW+02,KSW+05}, however, suffer from two problems: they are unfeasible for larger systems and require using trusted and well characterized measuring devices.

Observing nonclassical correlations through the violation of a Bell-type inequality \cite{Bel64} can be used to detect entanglement in a device-independent way, i.e., it implies the presence of entanglement without the need to have a trust in the measurement devices. This property of the violation of Bell inequalities makes them a useful resource for implementing quantum information protocols such as quantum key distribution in a device-independent way \cite{BCP+14}. Remarkably, maximal quantum violation of certain Bell inequalities can be used to demonstrate a phenomenon called  "self-testing of quantum states and measurements"  \cite{MY04,SB20}, which can be used to provide device-independent characterization of quantum devices.
Recently, such form of certification based on the phenomenon of nonlocality has been explored extensively. For instance, several self-testing statements
for multiqubit graph states have recently been derived in Refs. \cite{Kan16,BAS+20,PPW21}.
However, genuine demonstration of the violation of Bell inequalities requires a spatial separation between the subsystems.

Sequential quantum measurements on a single system can be used to observe 
quantum contextuality \cite{KS67} and temporal quantum correlations \cite{LG85,BE14,BKM+15} through the violation of suitable inequalities. 
Apart from the foundational relevance of these two notions of nonclassicality, on one side, they have been explored as a resource for quantum information applications such as measurement-based quantum computation \cite{BCG+21,MPA+14,PhysRevLett.119.120505,2014contextualitymagicstates}.  
On the other side, they have also been used for certification of relevant quantum properties such as the dimension of the underlying quantum system \cite{GCC+14,SSG+20}. More importantly, contextuality and temporal correlations have also been exploited for
certification of quantum states and/or measurements \cite{BRV+21,BRV+19,IMO+20,SSA20,MMJ+21}. 


Motivated by the above results, in this work we 
introduce a family of noncontextuality inequalities
that are maximally violated by many-qubit quantum systems and 
certain pairs of anticommuting observables. In constructing our inequalities 
we exploit the multiqubit stabilizer formalism known for its use in quantum error
correction \cite{GottesmanThesis,PhysRevA.65.012308,PhysRevA.78.042303}.
These inequalities are scalable in the sense that the number of expectation values they are built from scales polynomially with the number of observables $2n$ that are measured; yet their maximal violation can be achieved by quantum systems of dimension at least $2^n$. From this point of view they can be seen as dimension witnesses.
In the particular case $n=3$ our family reproduces an inequality that in 
the non-locality context is known 
as the Mermin-Ardehali-Belinskii-Klyshko (MABK) inequality \cite{PhysRevA.46.5375,Uspekhi,Mer90} 
(see also Refs. \cite{GTH+05,TGB06, WMA13} for other approaches to reveal 
Bell nonlocality/quantum contextuality based on stabilizer formalism).
Yet for $n>3$ these families are distinct. We then show that 
our inequalities can be used for certification of multiqubit quantum systems
in the sense of Ref. \cite{IMO+20}. In fact, we generalize the results
of that work to any number of qubits.

Our work is organized as follows. In Section \ref{preliminaries} we outline the  contextuality scenario and provide the definitions of graph states and self-testing. In Section \ref{simplestinequality} we present the simplest inequality designed to certify the three qubit-graph state corresponding to the complete graph together with three pairs of anticommuting observables. In Section \ref{scalable} we present a scalable family of inequalities designed to certify multiqubit quantum systems. In Section \ref{Robustness} we investigate whether our certification schemes are robust.

\section{Preliminaries}\label{preliminaries}

We begin by illustrating our scenario and introducing the relevant notations and definitions.

\subsection{Contextuality scenario}
\label{scenario}

A standard contextuality scenario is defined by a triple of sets: a set of measurements, a set of outcomes of the measurements, and a set of contexts, which are the subsets of compatible measurements. The notion of compatibility in a contextuality scenario means that the measurements that belong to the same context can be performed jointly or in a sequence in such a way that the observed statistics are independent of the
order in which these measurements were performed.
%
%
In the latter case, however, the measurements are non-demolishing,
meaning that they do not physically destroy the system.

Each run of the experimental observation comprises of preparation of a physical system
followed by a sequence of non-demolishing measurements in a device as depicted in Fig. \ref{fig}. The measurement device has no memory and returns only the actual post-measurement state. 
%
%
The measurement device has different settings, each of which yields two outcomes which we label by $\pm1$. The contexts will be defined in the specific scenarios studied.
Let us stress here that in the quantum case the above scenario comprises the most general situation in which the physical system is described  by a mixed state and the measurements need not be projective.

After repeating this experiment many times, one estimates the joint probabilities of obtaining the outcomes of measurements that are performed on the preparation, and consequently, their correlation functions, which are average values over the outcomes of the measurements. For instance, if the measurements $A_1,A_2,A_3$, which belong to the same context, are performed in sequence or jointly, we can estimate the $2^3$ joint probabilities $p(a_1,a_2,a_3|A_1,A_2,A_3)$ as well as the correlation function
\beq
\la A_1A_2A_3 \ra=\sum_{a_i=\pm1} a_1a_2a_3 p(a_1,a_2,a_3|A_1,A_2,A_3).
\eeq
This notation can be naturally extended to any sets of compatible
measurements.

\begin{figure}[http]
\centering
\includegraphics[scale=0.39]{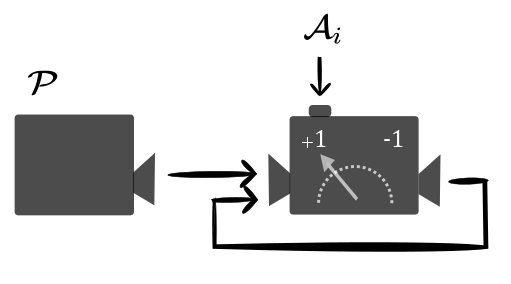}
\caption{{\bf Measurement set-up.} A contextuality experiment comprises of a preparation device $\mathcal{P}$ that prepares some quantum state $\rho$ which is later measured sequentially by the non-demolishing measurement device
with settings $\mathcal{A}_i$; each of these measurements yields the $\pm 1$ outcome. Figure from Ref. \cite{SSA20}.}
\label{fig}
\end{figure}

To reveal contextuality in the experiment one typically uses noncontextuality inequalities as violation of such inequalities by the joint probabilities implies that any noncontextual hidden variable model cannot reproduce them \cite{KS67}. Typically 
such inequalities are defined in terms of linear expressions composed of correlation functions. For instance, in a scenario where the measurements are performed in triples, we can consider the following form of inequality:
\beq \label{gennoncineq}
\mathcal{I} := \sum_{i,j,k} c_{i,j,k} \la A_i A_j A_k \ra \le \eta_C \le \eta_Q
\eeq
where $c_{i,j,k}$ are real coefficients to be chosen, and $\eta_C$ and $\eta_Q$ are the classical and quantum bounds.

If there is a noncontextual hidden variable model that describes the joint probabilities, then the inequality with the classical bound $\eta_C$ is satisfied. Here, the meaning of classicality is mathematically defined as the existence of a noncontextual hidden variable model, for which the expectation values
in (\ref{gennoncineq}) factorize and each individual expectation value is deterministic, i.e., $\la A_i A_i A_k \ra =a_ia_ja_k$, with $a_i\in\{+1,-1\}$. Consequently, the classical bound $\eta_C$ of (\ref{gennoncineq}) can be derived as the maximum value that can be attained by any such model,
\beq
\eta_C = \max_{a_i=\pm1}\left( \sum_{i,j,k} c_{i,j,k}  a_ia_ja_k \right).
\eeq

On the other hand, the quantum bound of the inequality defined as the optimal value of the linear expression obtained over all the possible quantum states and measurements in any Hilbert space. Since we do not specify the dimension of the underlying Hilbert space, without any loss of generality, we can assume that the measurements are projective and the state is pure (see, e.g., Ref. \cite{IMO+20} where an extension of the Neumark dilation theorem is proven). In other words, any correlations obtained within the above experiment can always be reproduced with a pure state and projective measurements satisfying the same compatibility relations.

For instance, in the case of the inequality given in \eqref{gennoncineq} the quantum bound is evaluated to be
\beq \label{tracedefined}
\eta_Q = \sup_{A_i;\rho} \left[ \sum_{i,j,k} c_{i,j,k} \tr\left( \rho  A_iA_jA_k \right) \right],
\eeq
where the observables $A_i$ are Hermitian operators acting on a  Hilbert space $\mathcal{H}$ and satisfying $A^2_i=\I$ for any $i$ and $\rho=|\psi\rangle\!\langle\psi|$ is some pure state that describes the preparation. 


Our aim here is to introduce certain noncontextuality inequalities
that are scalable in the sense that the number of expectation values they consists of grows polynomially with $n$, and, at the same time their maximal violation can be achieved only by quantum systems of dimension $2^n$. We also explore whether these inequalities can be used for certification of quantum states and measurements.

\subsection{Graph states}

A graph $\mathcal{G}=(\mathcal{V},\mathcal{E})$ is a mathematical object defined by a set of vertices $\mathcal{V}$ and a set of edges $\mathcal{E}$ that connect some pairs of vertices. By $\mathcal{N}(i)$ we denote the neighborhood of the vertex $i$, that is, a set of those vertices that are connected to $i$ by an edge. Also, we call a graph connected if any two vertices are connected by a path composed of edges. 

Interestingly, one can exploit the notion of a graph to 
define classes of pure entangled multipartite states. 
While in principle there are many ways of doing that 
here we follow the definition based on the 
stabilizer formalism \cite{Got96} (see also Ref. \cite{HDE+06} for a review on graph states). It allows one to associate an $N$-qubit entangled state to any connected $N$-vertex graph $\mathcal{G}$. 

In order to present the construction consider the Pauli matrices 
\be
X =  \begin{pmatrix} 
  0 & 1 \\ 
  1 & 0 \\
  \end{pmatrix},
\quad 
Y =  \begin{pmatrix} 
  0 & -\mathbbm{i} \\ 
  \mathbbm{i} & 0 \\
  \end{pmatrix},
  \quad
Z =  \begin{pmatrix} 
  1 & 0 \\ 
  0 & -1 \\
  \end{pmatrix}.
\ee
Now, to each vertex $i\in\mathcal{V}$ one associates an $N$-qubit operator 
$G_i$ defined as
\begin{equation} \label{Gi}
    G_{i} = X_{i}\otimes  \bigotimes_{j \in \mathcal{N}(i)} Z_{j}, 
\end{equation} 
where the single $X$ acts on site $i$, whereas the $Z$ operators act on all sites that belong to the neighbourhood $\mathcal{N}(i)$ of $i$. Having introduced the $G_i$ operators we define the graph state.
\begin{defi}\label{def:gs}
We define the graph state $|G \rangle $ associated to the graph $\mathcal{G}=(\mathcal{V},\mathcal{E})$ as the unique state stabilized by the corresponding operators $G_i$ \eqref{Gi}, that is,
\begin{equation}\label{eq2}
   G_{i}|G \rangle = |G \rangle ,\qquad  \forall i=1,\dots,N.
\end{equation}
In other words, $\ket{G}$ is the unique common eigenstate of all $G_i$ corresponding to eigenvalue $+1$.
\end{defi}
The operators $G_i$ are usually called the stabilizing operators. Notice also that they mutually commute and the Abelian group generated by them is called a stabilizer.

Two simple examples of connected graphs with three vertices are depicted in Fig. \ref{graphs1and2}. The graph on the left is a complete graph, i.e., one in which any vertex is connected to any other vertex by an edge. The unique three-qubit state associated to this graph is stabilized by the following three stabilizing operators:
\begin{eqnarray}\label{Stab3}
G_1 &=& X \otimes Z \otimes Z, \\
G_2 &=& Z \otimes X \otimes Z, \\
G_3 &=& Z \otimes Z \otimes X,
\end{eqnarray}
and can be stated as
\begin{eqnarray}\label{FCg}
|G' \rangle  &=& \frac{1}{\sqrt{8}}( 
{\left| 000 \right\rangle} 
+ {\left| 100 \right\rangle} 
+ {\left| 010 \right\rangle} 
- {\left| 110 \right\rangle} \nonumber \\
&&\hspace{0.75cm}+ {\left| 001 \right\rangle} 
- {\left| 101 \right\rangle} 
- {\left| 011 \right\rangle} 
- {\left| 111 \right\rangle} 
).
\end{eqnarray}
The graph on the right side in Fig. \ref{graphs1and2} is a non-isomorphic to the complete graph. The unique three-qubit state associated with this graph is stabilized by  
\begin{eqnarray}
G_1 &=& X \otimes Z \otimes Z, \\
G_2 &=& Z \otimes X \otimes \mathbbm{1}, \\
G_3 &=& Z \otimes \mathbbm{1} \otimes X,
\end{eqnarray}
and is given by
\begin{eqnarray}\label{nFCg}
|G'' \rangle  &=&
 \frac{1}{\sqrt{8}}( 
{\left| 000 \right\rangle} 
+ {\left| 100 \right\rangle} 
+ {\left| 010 \right\rangle} 
- {\left| 110 \right\rangle} \nonumber \\
&&\hspace{0.75cm}+ {\left| 001 \right\rangle} 
+ {\left| 101 \right\rangle} 
- {\left| 011 \right\rangle} 
+ {\left| 111 \right\rangle}
).
\end{eqnarray}

\begin{figure}
    \centering
    \includegraphics[scale=0.26]{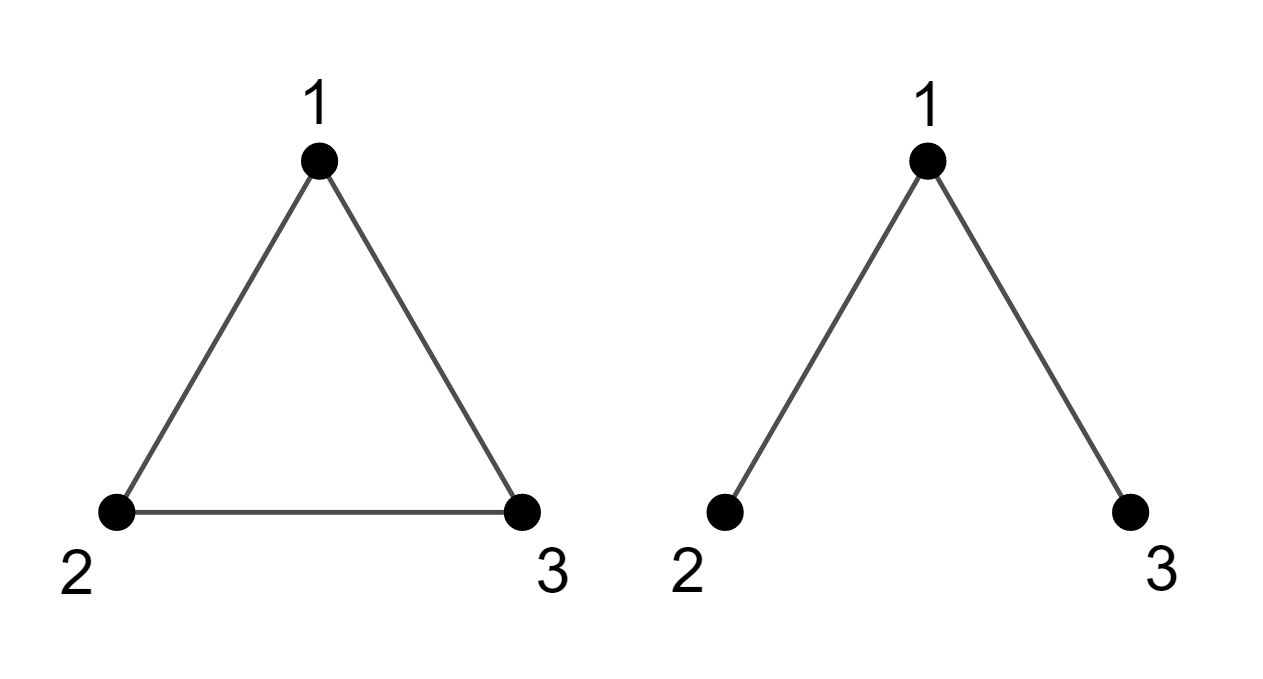}
    \caption{Two non-isomorphic graphs with 3 vertices.} \label{graphs1and2}
  \end{figure}

Let us notice that although both these exemplary states $\ket{G'}$
and $\ket{G''}$ correspond to non-isomorphic graphs, they are actually 
equivalent to the same three-qubit Greenberger-Horne-Zeilinger (GHZ) state, 
\begin{equation}\label{GHZstate}
    \ket{\mathrm{GHZ}}=\frac{1}{\sqrt{2}}(\ket{000}-\ket{111}),
\end{equation}
by suitable local unitary transformations.

In the context of multipartite qubit states such vertex is associated to a qubit and edges represent entanglement between qubits. 
However, in our scheme that we propose for certification of such multiqubit state, we do not assume that there exists a local Hilbert space corresponding to each vertex.

Let us finally notice that the construction of the graph state
corresponding to the three-vertex complete graph can be generalized
to any number of qubits. The corresponding stabilizing operators
are given by 
\begin{equation}\label{ngraph}
    G_i=Z_1\ldots Z_{i-1}X_iZ_{i+1}\ldots Z_n,
\end{equation}
with $i=1,\ldots,n$. They stabilize an $n$-qubit graph state 
that is local-unitary equivalent to the GHZ state $(1/\sqrt{2})(\ket{0}^{\otimes n}+\ket{1}^{\otimes n})$.

\subsection{Self-testing}

Self-testing, originally defined in Ref. \cite{MY04} in the context of non-locality, aims to certify an unknown quantum state and a set of measurements based on statistics obtained in an experiment, up to certain unitary equivalence and the existence of auxiliary degrees of freedom. Self-testing based on violation of Bell inequalities is by definition a device-independent task as it does not depend on any assumption on the state and measurements. In a Bell test, the assumption of commutativity between the measurements arises due to the fact that spatially separated subsystems cannot communicate instantaneously with each other.

Self-testing statements based on violation of noncontextuality inequalities require, on the other hand, the assumption of compatibility of measurements. First, contextuality-based self-testing was defined
in Ref. \cite{BRV+19} in a similar way to how
self-testing is defined within the Bell scenario. Here 
we provide a slightly different definition that 
takes inspiration from Ref. \cite{IMO+20} 
(see also Ref. \cite{SSA20})
and fits better the inequalities introduced here.

To this aim, let us consider again the experiment described in 
Sec. \ref{scenario}, but now we assume that both the state (in general mixed) and measurements (in general non-projective)
are unknown; still, the measurements obey certain 
compatibility relations. Since 
%
%
we do not specify the dimension of the underlying Hilbert space,
without loss of generality, we can assume that
the measurements are projective and the state is pure  (see, e.g., Ref. \cite{IMO+20}). 
In other words, any correlations giving rise to the violation of the noncontexuality inequality can always be reproduced with a pure state 
$\rho=|\psi\rangle\!\langle\psi|$ and observables $A_i$ obeying $A_i^2=\mathbbm{1}$,
all acting on some Hilbert space of unknown dimension $\mathcal{H}$.
These observables obey the same compatibility relations.

At the same time we consider a reference experiment 
with a known pure state $\ket{\tilde{\psi}}\in\mathbbm{C}^d$ for some
$d$ and known observables $\tilde{A}_i$ acting on $\mathbbm{C}^d$
that obey the same compatibility relations.
\begin{defi}\label{defselftesting}
Suppose an unknown  state $|\psi \ra  \in \mathcal{H}$ and a set of measurements $ A_i $   violate a given noncontextuality inequality maximally, then this maximal quantum violation 
self-tests the state $| \tilde{\psi}  \ra  \in \mathbbm{C}^d$ and the set of measurements $ \tilde{A_i} $ if there exists exists a projection $P:\mathcal{H} \rightarrow \mathbb{C}^d$ and a unitary $U$ acting on $\mathbbm{C}^d$ such that
\beq\label{DefSelf}
U^\dagger (P\, A_i\, P^\dagger) U &=& \tilde{A}_i, \\
U (P\,|\psi \ra) &=& |\tilde{\psi} \ra.
\eeq
\end{defi}
Speaking alternatively, the above definition says 
that based on the observed nonclassicality one is able to identify 
a subspace $V=\mathbbm{C}^d$ in $\mathcal{H}$ on which 
all the observables act invariantly. Equivalently, 
$A_i$ can be decomposed as $A_i=\hat{A}_i\oplus A_{i}'$, where
$\hat{A}_i$ act on $V$, whereas $A_i'$ act on the orthocomplement of $V$ in $\mathcal{H}$; in particular, $A_i'\ket{\psi}=0$. Moreover, there is a unitary 
$U^{\dagger}\,\hat{A}_i\,U=\tilde{A}_i$.

\section{The simplest inequality and Self-testing of three-qubit graph state}\label{simplestinequality}

\subsection{Simplest inequality}

Here, we consider a noncontextuality inequality that allows one for self-testing the complete graph state of three qubits and simultaneously a set of six dichotomic observables denoted by $A_i$ and $B_j$ $(i,j\in\{ 1,2,3 \})$ such that $\{A_i,B_i\}=0$ ($i=1,2,3$).
The measurements outcomes are labelled by $\pm1$, which means that the measurement operators have eigenvalues $\pm1$ and thus they satisfy $A^2_i=B^2_i=\I$. 

The compatibility hypergraph of the scenario is depicted in Figure \ref{fig1}. A compatibility hypergraph is one in which the vertices are associated with the measurements of the scenario and the hyperedges represent the contexts which are subsets of compatible measurements. The noncontextuality inequality we consider is given by
\begin{align} \label{Ineq3}
\mathcal{I}_3 & :=  \la A_1B_2B_3 \ra + \la B_1A_2B_3 \ra + \la B_1B_2A_3 \ra - \la A_1A_2A_3 \ra \nonumber \\ 
&  \le  \eta_C = 2 < \eta_Q = 4.
\end{align}
The above inequality is equivalent to a noncontextuality inequality employed in Ref. \cite{CEG+14} to demonstrate quantum contextuality of a single eight-dimensional quantum system. In the context of Bell scenario it is the well-known Mermin-Ardehali-Belinskii-Klyshko (MABK) inequality \cite{PhysRevA.46.5375,Uspekhi,Mer90}
for which a non-locality-based self-testing statement was derived in Ref. \cite{Kan16}.

Following the argument in the previous section, the classical bound of the above expression can be obtained by assigning the values $\pm1$ to each variable $A_i$ and $B_j$ which implies $\eta_C = 2$. At the same time, the algebraic maximum of $\mathcal{I}_3$ is four since the correlators can take a value between $-1$ and $+1$, and, importantly, it equals the maximal quantum value of $\mathcal{I}_3$, that is $\eta_Q=4$. Indeed, it can be checked that the following set of measurements:
\beq \label{AiBj}
A_1 &=& X \otimes \mathbbm{1} \otimes \mathbbm{1}, \qquad B_1 = Z \otimes \mathbbm{1} \otimes \mathbbm{1}, \nonumber \\
A_2 &=& \mathbbm{1} \otimes X \otimes \mathbbm{1}, \qquad
B_2 = \mathbbm{1} \otimes Z \otimes \mathbbm{1}, \nonumber \\
A_3 &=&  \mathbbm{1} \otimes \mathbbm{1} \otimes X, \qquad
B_3 = \mathbbm{1} \otimes \mathbbm{1} \otimes Z,
\eeq
together with the complete graph state $|G' \ra$ given by Eq. (\ref{FCg}) give rise to the algebraic maximum. This follows from the fact that for this choice of observables, the first three terms of the inequality correspond to 
the stabilizing operators $G_i$ given in Eq. (\ref{Stab3}), 
whereas the last one to their product $G_1G_2G_3=-X_1X_2X_3$.

Let us notice that almost all pairs of these observables commute except for
those with the same substripts which anticommute, that is,
\begin{equation}\label{dupa}
    [A_i,A_j]=[A_i,B_j]=[B_i,B_j]=0\qquad (i\neq j)
\end{equation}
and
\begin{equation}
    \{A_i,B_i\}=0\qquad (i=1,2,3).
\end{equation}

For further convenience let us also comment on the symmetries of the inequality (\ref{Ineq3}). A simple way to visualize these symmetries is by using the compatibility hypergraph of the scenario that is represented in Fig. \ref{fig1}. For instance, note that under the relabeling of the measurements $A_i \leftrightarrow A_j$ together with $B_i \leftrightarrow B_j$, i.e., a permutation of subscripts $i \leftrightarrow j$, the inequality remains the same. We can also observe that a cyclic permutation of measurements $A_1 \rightarrow A_2 \rightarrow A_3 \rightarrow A_1$ together with $B_3 \rightarrow B_1 \rightarrow B_2 \rightarrow B_3$ does not change the structure of the hypergraph and therefore the inequality remains the same. 
Other symmetries can be found just by looking at the hypergraph since it captures the intrinsic structure of the associated inequality. The above symmetries as well as the structure of the $\mathcal{I}_3$ expression will be vital for our considerations, in particular, for generalizing this inequality into a family
of inequalities maximally violated by $n$-qubit GHZ states and $n$ pairs of
anticommuting observables.

\begin{figure}
    \centering
    \includegraphics[scale=0.5]{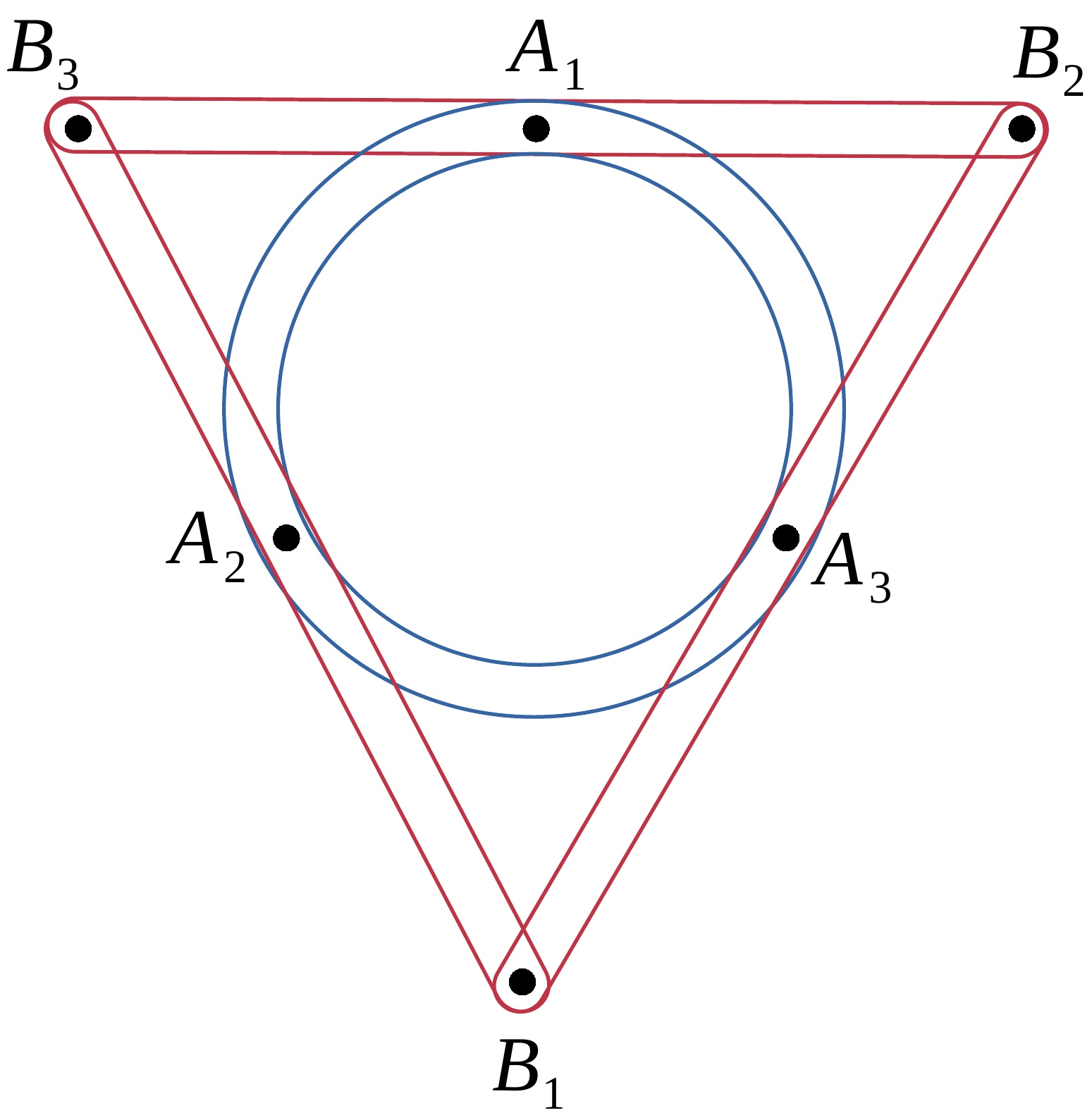}
    \caption{Hypergraph \cite{Bretto-MR3077516} of compatibility of the Kochen-Specker contextuality scenario  associated to the inequality \eqref{Ineq3}. In this hypergraph, the vertices represent the observables of the scenario and the hyperedges represent the contexts. The red hyperedges are associated to the correlators which enter the inequality \eqref{Ineq3} with $+1$ and the blue to correlators corresponding to the negative sign. Here the colors are conveniently chosen in order to elucidate symmetric properties of the inequality.} \label{fig1}
 \end{figure}

\subsection{Self-testing}

We now prove that the maximal quantum violation of the inequality (\ref{Ineq3}) can be used for certification of the GHZ state (\ref{FCg}) along with the observables (\ref{AiBj}). To this aim, consider a quantum realization given by a pure state $|\psi\ra\in\mathcal{H}$ and a set of quantum observables $A_i,B_j$ with $i,j=1,2,3$ acting on $\mathcal{H}$, where $\mathcal{H}$ is some unknown Hilbert space. We additionally assume that these observables obey the compatibility relations presented in Fig. \ref{fig1}, which translate into the 
commutation relations in Eq. (\ref{dupa}).

Assume then the correlations obtained by measuring $A_i$ and $B_j$ on the state $\ket{\psi}$ attains the quantum bound of the inequality \eqref{Ineq3}. 
This directly implies that the first three terms in $\mathcal{I}_3$
take value $1$, whereas the last term equals $-1$, which
via the Cauchy-Schwartz inequality translate to the following equations:
\beq\label{qrealization1}
A_1B_2B_3 |\psi\ra &=& |\psi\ra \quad \& \quad \texttt{permutations}, \\ \label{qrealization2}
B_1A_2B_3 |\psi\ra &=& |\psi\ra  \quad \& \quad \texttt{permutations},   \\ \label{qrealization3}
B_1B_2A_3 |\psi\ra &=& |\psi\ra  \quad \& \quad \texttt{permutations},    \\ \label{qrealizations4} 
A_1A_2A_3 |\psi\ra &=& -|\psi\ra  \quad \& \quad \texttt{permutations},
\eeq
where \texttt{permutations} refers to the fact that the above relations also hold if we permute the observables, which is a consequence of the commutation relations (\ref{dupa}). One directly deduces from these identities that
\beq \label{preSym}
A_1B_1 |\psi\ra &=& A_1A_2B_3 |\psi\ra = -B_3A_3 |\psi\ra \nonumber \\
                &=& A_1A_3B_2 |\psi\ra = A_3B_3 |\psi\ra, 
\eeq
where in the first line we used $B_1 |\psi\ra = A_2B_3 |\psi\ra$ from Eq. \eqref{qrealization2} and then the fact that $B_3$ commutes with $A_1$ and $A_2$ along with the relation $A_1A_2 |\psi\ra = -A_3|\psi\ra$ that stems from Eq. \eqref{qrealizations4}. On the other hand, in the second line, we used $B_1 |\psi\ra = A_3B_2 |\psi\ra$ from Eq. \eqref{qrealization3} and $A_1B_2 |\psi\ra = A_3|\psi\ra$ from Eq. \eqref{qrealizations4}. 

Let us now employ the symmetries of the inequality. Indeed, as already mentioned, it is invariant under simultaneous permutations $A_i \leftrightarrow A_j$ and $B_i \leftrightarrow B_j$ for any $i\neq j$, and therefore one can straightforwardly infer from Eq. (\ref{preSym}) that the following chain of equalities hold true,
\beq \label{symm1}
A_1B_1 |\psi\ra &=& A_2B_2 |\psi\ra = A_3B_3 |\psi\ra \nonumber \\ &=& -B_1A_1 |\psi\ra = -B_2A_2 |\psi\ra = -B_3A_3 |\psi\ra.\nonumber\\
\eeq
From the above equations it follows that the operators $A_i$ and $B_i$ with $i=1,2,3$ anticommute when acting on the state
$|\psi\ra $, i.e.,
\beq\label{anticom}
\{A_1,B_1\}|\psi\ra = \{A_2,B_2\}|\psi\ra = \{A_3,B_3\}|\psi\ra = 0.
\eeq

Inspired by the approach of Ref. \cite{IMO+20}, we now define a subspace 
\beq \label{defInv}
    V &:=& \mathrm{span} \{ |\psi\ra , A_1 |\psi\ra, A_2 |\psi\ra, A_3 |\psi\ra, \nonumber \\ &&\hspace{1cm}B_1 |\psi\ra, B_2 |\psi\ra, B_3 |\psi\ra, A_1 B_1 |\psi\ra \},
\eeq
and prove the following fact for it.
\begin{lemma}\label{invsubV}
$V$ is an invariant subspace of all the observables $A_i$ and $B_j$ for $i,j\in\{ 1,2,3 \}$.
\end{lemma}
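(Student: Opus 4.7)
The plan is to verify the invariance directly: since $V$ is spanned by eight explicit vectors, it suffices to show that, for each generator $v$ of $V$ and each observable $A_i, B_j$ with $i,j\in\{1,2,3\}$, the image $A_i v$ (respectively $B_j v$) again lies in $V$. The tools available are (a) the ``stabilizer-type'' identities obtained from Eqs.\ \eqref{qrealization1}--\eqref{qrealizations4} by left-multiplying with a suitable observable and using $A_i^2=B_i^2=\mathbbm{1}$; for instance $B_2B_3\ket{\psi}=A_1\ket{\psi}$, $A_2B_3\ket{\psi}=B_1\ket{\psi}$, and cyclically, together with $A_iA_j\ket{\psi}=-A_k\ket{\psi}$ for $\{i,j,k\}=\{1,2,3\}$; (b) the operator-level commutation relations \eqref{dupa}, which let one freely permute $A_i$ and $B_j$ whenever $i\neq j$; (c) the state-level anticommutation $\{A_i,B_i\}\ket{\psi}=0$ from \eqref{anticom} together with the equalities $A_1B_1\ket{\psi}=A_2B_2\ket{\psi}=A_3B_3\ket{\psi}$ from \eqref{symm1}.

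First I would dispose of the easy rows of the table. The vector $\ket{\psi}$ is mapped to one of the generators by construction. For $A_i\ket{\psi}$: self-action yields $\ket{\psi}$; $A_j A_i\ket{\psi}=-A_k\ket{\psi}$ by (a); $B_iA_i\ket{\psi}=-A_iB_i\ket{\psi}$, which by \eqref{symm1} is a scalar multiple of the basis vector $A_1B_1\ket{\psi}$; and for $j\neq i$, $B_jA_i\ket{\psi}=A_iB_j\ket{\psi}$ reduces to one of the $B_k\ket{\psi}$ by the appropriate stabilizer identity. An entirely parallel analysis handles each $B_j\ket{\psi}$.

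The only non-trivial row is the one corresponding to $A_1B_1\ket{\psi}$. Acting by $A_1$ returns $B_1\ket{\psi}\in V$; acting by $B_1$ I would use $A_1B_1\ket{\psi}=-B_1A_1\ket{\psi}$ from \eqref{anticom}, so $B_1\!\cdot\!A_1B_1\ket{\psi}=-B_1^2A_1\ket{\psi}=-A_1\ket{\psi}\in V$. For $j\in\{2,3\}$ the observables $A_j,B_j$ commute with both $A_1$ and $B_1$ by (b), so one may push them onto $\ket{\psi}$ first, e.g.
\begin{equation}
A_2\,A_1B_1\ket{\psi}=A_1A_2B_1\ket{\psi}=A_1(A_2B_1\ket{\psi})=A_1B_3\ket{\psi}=B_2\ket{\psi},
\end{equation}
using (a) twice. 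The remaining three cases ($A_3$, $B_2$, $B_3$ on $A_1B_1\ket{\psi}$) follow by the same recipe: commute through, apply one stabilizer-type identity to eliminate the $B_1$ factor, and apply a second identity to land on one of the basis vectors of $V$.

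The main obstacle is the bookkeeping in the $A_1B_1\ket{\psi}$ row, where one must scrupulously distinguish relations that hold only when applied to the physical state $\ket{\psi}$ (the stabilizer identities and the anticommutation) from the genuine operator identities (the commutations (\ref{dupa}) and $A_i^2=\mathbbm{1}$): the latter may be used at any point in a product, the former only after the operator in question has been pushed all the way against $\ket{\psi}$. Once this is done carefully, all 48 checks collapse to one of the eight listed vectors, proving that $V$ is invariant under every $A_i$ and every $B_j$.
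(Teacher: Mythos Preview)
Your proposal is correct and follows essentially the same route as the paper: direct verification that each observable maps every spanning vector of $V$ back into $V$, using the stabilizer-type identities, the operator commutations \eqref{dupa}, and the state-level anticommutation \eqref{anticom}/\eqref{symm1}. The only difference is one of economy: the paper checks the action of $A_1$ and $B_1$ only and then invokes the permutation symmetry $i\leftrightarrow j$ of the inequality (together with the remark that $A_1B_1\ket{\psi}=A_iB_i\ket{\psi}$, so $V$ itself is symmetric) to cover the remaining four observables, whereas you aim at all 48 checks explicitly.
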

\begin{proof}
One can verify with the aid of Eqs. (\ref{qrealization1})-(\ref{qrealizations4}) that the action of the operator $A_1$ on the eight vectors spanning $V$ is a permutation of these vectors up to the factor $-1$. In exactly the same way one shows that $B_1V\subseteq V$. Therefore, we conclude that the subspace $V$ is invariant under the action of the observables $A_1$ and $B_1$. By the symmetry of the inequality (\ref{Ineq3}) it then follows that the subspace $V$ is invariant under the action of all observables $A_i$ and $B_j$ for $i,j\in\{ 1,2,3 \}$.
\end{proof}

It should be noticed that due to  Eq. \eqref{symm1}, the subspace $V$ stays the same if one replaces the last vector $A_1B_1\ket{\psi}$ in Eq. (\ref{defInv}) by $A_2B_2 |\psi\ra$ or $A_3B_3 |\psi\ra$. 

Due to Lemma \ref{invsubV}, it suffices for our purpose to identify the form of the state $|\psi\ra$ and the operators $A_i$ and $B_j$ restricted to the subspace $V$. 
In fact, the whole Hilbert space splits as $\mathcal{H}=V\oplus V^{\perp}$, 
where $V^{\perp}$ is an orthocomplement of $V$ in $\mathcal{H}$. 
Then, the fact that $V$ is an invariant subspace of all
the observables $A_i$ and $B_j$ means that they have the 
following block structure 
\begin{equation}\label{block}
    A_i=\hat{A}_i\oplus A_i',\qquad B_{j}=\hat{B}_j\oplus B_j',
\end{equation}
where $\hat{A}_i=PA_iP$ and analogously $\hat{B}_i=PB_i P$ with $P:\mathcal{H}\to V$ being a projection onto $V$. Since $A_i'$ and $B_j'$ 
act trivially on $V$, that is $A_i'V=B_j'V=0$, which means that the
observed correlations giving rise to the maximal violation of 
the inequality (\ref{Ineq3}) come solely from the subspace $V$, 
in what follows we can restrict our attention to the 
operators $\hat{A}_i$ and $\hat{B}_j$. 

First, from the fact that $A_i$ and $B_j$ are observables obeying $A_i^2=B_j^2=\mathbbm{1}$, it directly follows that $\hat{A}_i$, $\hat{B}_j$ are observables too and satisfy 
\begin{equation}\label{dupa3}
 \hat{A}_i^2=\hat{B}_j^2=\mathbbm{1}_V\qquad (i,j=1,2,3),   
\end{equation}
where $\mathbbm{1}_V$ is the identity acting on $V$.
Second, Eq. (\ref{block}) implies that the hatted observables must obey the same commutation relations as $A_i$ and $B_j$, that is, 
\begin{equation}
    [\hat{A}_i,\hat{A}_j]=[\hat{A}_i,\hat{B}_j]=[\hat{B}_i,\hat{B}_j]=0\qquad (i\neq j)
\end{equation}
Third, it turns out that the relations (\ref{anticom}) force $\hat{A}_i$
and $\hat{B}_i$ to anticommute on the subspace $V$. 

\begin{lemma}\label{lemma2}
Suppose the maximal quantum violation of the inequality (\ref{Ineq3}) is observed. 
Then, $\{ \hat{A}_i,\hat{B}_i \} = 0$ for all $i\in\{ 1,2,3 \}$.
\end{lemma}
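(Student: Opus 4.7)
The plan is to promote the single-vector identity $\{A_i,B_i\}\ket{\psi}=0$ established in (\ref{anticom}) into an operator identity on the invariant subspace $V$. Since $V$ is invariant under every $A_i$ and $B_j$ by Lemma \ref{invsubV}, the block decomposition (\ref{block}) together with the relation $POP=OP$ that holds for any such observable $O$ gives $\{\hat A_i,\hat B_i\}=P\{A_i,B_i\}P$. Therefore it suffices to verify that $\{A_i,B_i\}\ket{v}=0$ for every vector $\ket{v}$ in the spanning set of $V$ listed in (\ref{defInv}).

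I would dispose of the easy cases first. Whenever a spanning vector has the form $O\ket{\psi}$ with $O$ commuting with both $A_i$ and $B_i$, the operator $O$ can be pushed past the anticommutator:
\begin{equation*}
\{A_i,B_i\}\,O\ket{\psi}=O\,\{A_i,B_i\}\ket{\psi}=0.
\end{equation*}
By the commutation relations (\ref{dupa}) this immediately takes care of $\ket{\psi}$, of $A_k\ket{\psi}$ and $B_k\ket{\psi}$ for every $k\neq i$, and, when $i\in\{2,3\}$, also of $A_1B_1\ket{\psi}$.

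The nontrivial cases are $A_i\ket{\psi}$, $B_i\ket{\psi}$, and, when $i=1$, $A_1B_1\ket{\psi}$. For these I would combine the involution property $A_i^2=B_i^2=\mathbbm{1}$ with the identity $A_iB_i\ket{\psi}=-B_iA_i\ket{\psi}$ coming from (\ref{anticom}). For example,
\begin{equation*}
\{A_i,B_i\}A_i\ket{\psi}=A_iB_iA_i\ket{\psi}+B_i\ket{\psi}=-A_i^2B_i\ket{\psi}+B_i\ket{\psi}=0,
\end{equation*}
and an analogous calculation disposes of $B_i\ket{\psi}$. For the vector $A_1B_1\ket{\psi}$ one first derives the auxiliary identity $B_1A_1B_1\ket{\psi}=-B_1^2A_1\ket{\psi}=-A_1\ket{\psi}$ by applying the anticommutation twice, after which $\{A_1,B_1\}A_1B_1\ket{\psi}$ telescopes to zero.

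The main obstacle is purely bookkeeping: each time an extra factor of $A_i$ or $B_i$ is slid past the anticommutator, one must make sure the manipulation ultimately reduces to the known identity $\{A_i,B_i\}\ket{\psi}=0$ rather than to an expression of unclear status. Once $\{A_i,B_i\}\ket{v}=0$ has been checked on all eight spanning vectors for a single $i$, the remaining two values follow either by a verbatim repetition of the argument or by invoking the $A_i\leftrightarrow A_j$, $B_i\leftrightarrow B_j$ symmetry of $\mathcal{I}_3$ together with the equality $A_1B_1\ket{\psi}=A_2B_2\ket{\psi}=A_3B_3\ket{\psi}$ recorded in (\ref{symm1}). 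This yields $\{A_i,B_i\}\ket{v}=0$ for every $\ket{v}\in V$, and hence the claimed operator identity $\{\hat A_i,\hat B_i\}=0$.
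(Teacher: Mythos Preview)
Your proof is correct and follows essentially the same approach as the paper: you verify that $\{A_i,B_i\}$ annihilates each of the eight spanning vectors of $V$, handling the commuting cases by pushing the operator past the anticommutator and the remaining cases $A_i\ket{\psi}$, $B_i\ket{\psi}$, $A_1B_1\ket{\psi}$ by direct computation using (\ref{anticom}) and $A_i^2=B_i^2=\mathbbm{1}$, then invoke the permutation symmetry of $\mathcal{I}_3$ for the other indices. The only cosmetic difference is that you made explicit the reduction $\{\hat A_i,\hat B_i\}=P\{A_i,B_i\}P$ via the invariance of $V$, which the paper leaves implicit.
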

\begin{proof}
Let us focus on the first pair $\hat{A}_1$ and $\hat{B}_1$. 
By checking the action of $\{ A_1,B_1 \}$ on all the eight vectors that span the subspace $V$ we can conclude that $\{ \hat{A}_1,\hat{B}_1 \}=0$. Indeed, Eq. (\ref{anticom}) implies that $\{ A_1,B_1 \}$ vanishes when acting on $\ket{\psi}$. Then, for $A_1\ket{\psi}$ one has 
\begin{equation}
 \{ A_1,B_1 \}A_1\ket{\psi}=(A_1B_1A_1+B_1)\ket{\psi}=0,   
\end{equation}
where the second equality follows again from Eq. (\ref{anticom}).
For $A_2\ket{\psi}$ and $A_3\ket{\psi}$ we can use the fact that 
$A_1$ and $B_1$ commute with $A_2$ and $A_3$, which gives 
\begin{equation}
    \{A_1,B_1\}A_i\ket{\psi}=A_i\{A_1,B_1\}\ket{\psi}=0\qquad (i=2,3).
\end{equation}
In exactly the same way one deals with the vectors $B_i\ket{\psi}$.
Finally, for the last vector in (\ref{defInv}), $A_1B_1\ket{\psi}$, 
one has
\begin{equation}
    \{A_1,B_1\}A_1B_1\ket{\psi}=(A_1B_1A_1B_1+\mathbbm{1})\ket{\psi}=0,
\end{equation}
where to get the last equality we use Eq. (\ref{anticom}). Owing
to the block form of $A_1$ and $B_1$ in Eq. (\ref{block}), all this implies 
that $\{\hat{A}_1,\hat{B}_1\}=0$.

One more time, by the symmetries of the inequality, we can draw the same conclusions for the remaining pairs of the observables $A_i$ and $B_i$. 
As a result $\{\hat{A}_i,\hat{B}_i\}=0$ for $i=1,2,3$, which completes the proof.
\end{proof}

With Lemma \ref{lemma2} at hand, we can now employ the standard
approach that has already been used in many non-locality-based self-testing 
schemes \cite{Kan16,KST+19,SSK+19,BAS+20}. Precisely, using this approach we can first 
infer that the dimension $d$ of the subspace $V$ is even. 
To see this, note that from the above anticommutation 
relation between $\hat{A}_i$ and $\hat{B}_i$, we have
\beq
\hat{A}_i = - \hat{B}_i\hat{A}_i\hat{B}_i\quad\mathrm{or}\quad
\hat{B}_i = - \hat{A}_i\hat{B}_i\hat{A}_i,
\eeq
which after taking trace on both sides simplify to $\tr(\hat{A}_i)=\tr(\hat{B}_i)=0$. It then follows that both the eigenvalues $\pm1$ of each observable $\hat{A}_i$ or $\hat{B}_i$ have equal multiplicities. This clearly implies that 
the dimension $d=\dim V$ is an even number, $d=2k$ for some $k\in\mathbbm{N}$, and thus $V=\mathbbm{C}^2\otimes\mathbbm{C}^k$. On the other hand, since $\dim V\leq 8$,
one concludes that $k=2,3,4$.

The fact that $\hat{A}_1$ and $\hat{B}_1$ are traceless means also that the operators $\hat{A}_1$ and $\hat{B}_1$ are equivalent to $X \otimes \mathbbm{1}_k$ and $Z \otimes \mathbbm{1}_k$ for some $k=2,3,4$ up to some unitary operation (see for instance Appendix B in Ref. \cite{KST+19} for the proof of this statement). This observation is one of the key ideas behind the proof of the following lemma. 

\begin{lemma}\label{lemma3}
Suppose the maximal quantum violation of the inequality (\ref{Ineq3}) is observed. Then, there exists a basis in $V$ such that 
\beq \label{AiBjtilde}
\hat{A}_1 &=& X \otimes \mathbbm{1} \otimes \mathbbm{1}, \qquad
\hat{B}_1 = Z \otimes \mathbbm{1} \otimes \mathbbm{1}, \nonumber \\
\hat{A}_2 &=& \mathbbm{1} \otimes X \otimes \mathbbm{1},\qquad
\hat{B}_2 = \mathbbm{1} \otimes Z \otimes \mathbbm{1}, \nonumber \\
\hat{A}_3 &=&  \mathbbm{1} \otimes \mathbbm{1} \otimes X, \qquad 
\hat{B}_3 = \mathbbm{1} \otimes \mathbbm{1} \otimes Z.
\eeq
\end{lemma}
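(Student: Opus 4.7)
The plan is to apply, iteratively for each of the three pairs $(\hat{A}_i,\hat{B}_i)$, the standard normal-form result for anticommuting involutions already cited in the excerpt (Appendix B of Ref.~\cite{KST+19}), and to use the mutual commutation relations between distinct pairs to isolate each pair on its own tensor factor of $V$. After three iterations, the cumulative change of basis will put all six observables simultaneously in the desired Pauli form.

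First I would treat the pair $\hat{A}_1,\hat{B}_1$: by Lemma~\ref{lemma2} together with Eq.~(\ref{dupa3}) they are anticommuting involutions on $V$, and the paragraph preceding Lemma~\ref{lemma3} already establishes that both are traceless, so $d:=\dim V=2k$. Invoking the normal-form result yields a unitary $U_1$ on $V$ after which $\hat{A}_1=X\otimes\mathbbm{1}_k$ and $\hat{B}_1=Z\otimes\mathbbm{1}_k$ with respect to a decomposition $V\simeq \mathbbm{C}^2\otimes \mathbbm{C}^k$. Next I would use that $\hat{A}_2,\hat{B}_2$ commute with both $\hat{A}_1$ and $\hat{B}_1$: since $\{X,Z\}$ generate the full matrix algebra $M_2(\mathbbm{C})$, their joint commutant in $M_2(\mathbbm{C})\otimes M_k(\mathbbm{C})$ is $\mathbbm{1}_2\otimes M_k(\mathbbm{C})$, which forces $\hat{A}_2=\mathbbm{1}_2\otimes \tilde{A}_2$ and $\hat{B}_2=\mathbbm{1}_2\otimes \tilde{B}_2$. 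These inherit the involution and anticommutation properties, so a second application of the normal-form result on $\mathbbm{C}^k$ yields $k=2l$ together with a unitary $U_2$ on $\mathbbm{C}^k$ producing $\tilde{A}_2=X\otimes\mathbbm{1}_l$, $\tilde{B}_2=Z\otimes\mathbbm{1}_l$.

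A third application of exactly the same argument to $(\hat{A}_3,\hat{B}_3)$, which commute with all four observables already placed on the first two tensor factors, gives $\hat{A}_3=\mathbbm{1}_2\otimes\mathbbm{1}_2\otimes\tilde{A}_3$ and $\hat{B}_3=\mathbbm{1}_2\otimes\mathbbm{1}_2\otimes\tilde{B}_3$, with $\tilde{A}_3,\tilde{B}_3$ traceless anticommuting involutions on $\mathbbm{C}^l$, so $l=2m$ and a further unitary puts them in the form $X\otimes\mathbbm{1}_m$, $Z\otimes\mathbbm{1}_m$. At this stage $\dim V=8m$, but the definition (\ref{defInv}) of $V$ as the span of eight vectors gives $\dim V\leq 8$, forcing $m=1$ and $\dim V=8$; composing the three unitaries into a single basis change on $V$ then produces exactly the form (\ref{AiBjtilde}). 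The main obstacle, and the step that requires most care, is the commutant argument repeated at each iteration: one must rigorously justify that operators commuting with $X\otimes\mathbbm{1}$ and $Z\otimes\mathbbm{1}$ on a bipartite space act as the identity on the first factor. This is a standard finite-dimensional fact, but I would isolate it as a small preliminary observation before the main induction so that the three recursive applications can be stated cleanly.
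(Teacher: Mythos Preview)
Your proposal is correct and follows essentially the same iterative strategy as the paper: bring $(\hat{A}_1,\hat{B}_1)$ to $X\otimes\mathbbm{1}_k,\,Z\otimes\mathbbm{1}_k$ via the anticommuting-involutions normal form, use commutation with these to push $(\hat{A}_2,\hat{B}_2)$ and $(\hat{A}_3,\hat{B}_3)$ to the second factor, and repeat. The only cosmetic difference is that the paper carries out the commutant step explicitly by expanding each operator in the Pauli basis on the first qubit and killing the $X,Y,Z$ components one by one, whereas you phrase it as the abstract fact that the commutant of $\{X\otimes\mathbbm{1},Z\otimes\mathbbm{1}\}$ is $\mathbbm{1}\otimes M_k(\mathbbm{C})$; both arguments are equivalent and the final composed unitary is the same.
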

\begin{proof}
First, from Lemma \ref{lemma2} we have $\{ \hat{A}_1,\hat{B}_1 \} = 0$ which implies that there exists a unitary $U_1$ acting on $V$ such that
\begin{eqnarray}\label{IdA1B1}
U_1^{\dagger}\, \hat{A}_1\, U_1 &=& X \otimes \mathbbm{1}_k, \\
U_1^{\dagger}\, \hat{B}_1\, U_1&=& Z \otimes \mathbbm{1}_k,
\end{eqnarray}
where, as already mentioned, the dimension $d$ of the subspace $V$ is given by $d=2k$ for some $k=2,3,4$. Using then the above form of $\hat{A}_1$ and $\hat{B}_1$ 
and the commutation relations (\ref{dupa}) we can write the other operators 
as follows:
\beq
\label{form1}U^{\dag}_1\,\hat{A}_2\,U_1 &=& \mathbbm{1}_2 \otimes M, \\
U^{\dag}_1\,\hat{B}_2\,U_1 &=& \mathbbm{1}_2 \otimes N, \\
U^{\dag}_1\,\hat{A}_3\,U_1 &=&  \mathbbm{1}_2 \otimes O, \\
U^{\dag}_1\,\hat{B}_3\,U_1 &=& \mathbbm{1}_2 \otimes P,
\eeq
where $M,N,O,P$ are Hermitian involutions acting on the subspace of dimension $k$. 
To show explicitly how the above equations are obtained let us focus on $\hat{A}_2$;
the proof for the other observables is basically the same. Since $\hat{A}_2$ acts on
$\mathbbm{C}^2\otimes\mathbbm{C}^k$, it can be decomposed in the Pauli basis as
\begin{equation}
    U_1^{\dagger}\,\hat{A}_2\,U_1=\mathbbm{1}_2\otimes M_1+X\otimes M_2+Y\otimes M_3+Z\otimes M_4,
\end{equation}
where $Y$ is the third Pauli matrix and $M_i$ are some Hermitian matrices acting on $\mathbbm{C}^k$. Now, it follows from the fact that $\hat{A}_2$ commutes with $\hat{A}_1$, that $M_3=M_4=0$. Then, from $[\hat{A}_2,\hat{B}_1]$ one obtains that
$M_2=0$, and, by putting $M_1=M$, we arrive at Eq. (\ref{form1}).

Second, from Lemma \eqref{lemma2}, we  have $\{ \hat{A}_2,\hat{B}_2\} = 0$ which is equivalent to $\{M,N\}=0$. Since both $M$ and $N$ are involutions, one concludes, as before, that $k=2k'$ for $k'=1,2$, or, equivalently, that $\mathbbm{C}^k=\mathbbm{C}^2\otimes\mathbbm{C}^{k'}$. Moreover, there exists another unitary transformation $U_2:\mathbbm{C}^k\to\mathbbm{C}^k$ such that
\begin{eqnarray}
    U_2^{\dagger}\,M\, U_2&=&X\otimes\mathbbm{1}_{k'},\\
    U_2^{\dagger}\,N\, U_2&=&Z\otimes\mathbbm{1}_{k'}.
\end{eqnarray}
Finally, to learn the form of $O$ and $P$ we can again employ the commutation
relations (\ref{dupa}). They imply in particular that $[M,O]=[N,O]=[M,P]=[N,P]=0$, 
and consequently,
\begin{equation}
    O=\mathbbm{1}_2\otimes O',\qquad P=\mathbbm{1}_2\otimes P',
\end{equation}
where $O'$ and $P'$ are some operators acting on $\mathbbm{C}^{k'}$ such that 
$[O']^2=[P']^2=\mathbbm{1}_{k'}$. Additionally, due to the fact that $\{\hat{A}_3,\hat{B}_3\}=0$, they must anticommute, $\{O',P'\}=0$.
This means that $k'=2$ and that there exist a unitary operation $U_3$
acting on this qubit Hilbert space such that 
\begin{equation}
    U_3^{\dagger}\,O'\,U_3=X,\qquad U_3^{\dagger}\,P'\,U_3=Z.
\end{equation}

Taking all this into account, one finds that $V\cong\mathbbm{C}^2\otimes\mathbbm{C}^2\otimes\mathbbm{C}^2$
and that there exists a single unitary operation $U = U_1 (\mathbbm{1}_2 \otimes U_2) (\mathbbm{1}_2 \otimes \mathbbm{1}_2 \otimes U_3)$ on $V$ which brings
all the observables $\hat{A}_i$ and $\hat{B}_i$ to the form in
(\ref{AiBjtilde}).
\end{proof}

We have thus arrived at one of our main results of this paper.

\begin{thm}\label{Theo3qubit}
If a quantum state $|\psi \ra$ and a set of measurements $A_i$ and $B_j$ with $i,j\in\{ 1,2,3 \}$ maximally violate the inequality \eqref{Ineq3}, then there exist a projection $P:\mathcal{H} \rightarrow V$ with $V=(\mathbbm{C}^2)^{\otimes 3}$ and a unitary $U$ acting on $V$ such that
\beq\label{dupawolowa}
U^\dagger\, (P\, A_i\, P^\dagger)\, U &=& X_i, \\
U^\dagger\, (P\, B_i\, P^\dagger)\, U &=& Z_i, 
\eeq
%
%
where $X_i$ and $Z_i$ are $X$ and $Z$ Pauli matrices acting on qubit $i$,
and
\begin{equation}
    U (P|\psi \ra) = |G \ra
\end{equation}
with $|G \ra$ being the three-qubit complete graph state defined in (\ref{FCg}).
\end{thm}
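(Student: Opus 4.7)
The plan is to combine Lemma \ref{lemma3} with the stabilizer characterization of the graph state given in Definition \ref{def:gs}. First I would invoke Lemma \ref{lemma3}, which already does most of the work: it produces a decomposition $V \cong (\mathbbm{C}^2)^{\otimes 3}$ and a unitary $U$ on $V$ mapping each restricted observable $\hat{A}_i$ to $X_i$ and each $\hat{B}_i$ to $Z_i$ in the standard three-qubit tensor-product basis. Combined with the block form $A_i = \hat{A}_i \oplus A_i'$, $B_j = \hat{B}_j \oplus B_j'$ established via Lemma \ref{invsubV}, this directly yields the first two identities in \eqref{dupawolowa}, since conjugation by $P$ extracts the $\hat{A}_i,\hat{B}_j$ blocks and $U$ then puts them into Pauli form.

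The remaining task is to pin down the state. I would define $|\tilde{\psi}\rangle := U (P|\psi\rangle)$. Since $|\psi\rangle$ is one of the spanning vectors in \eqref{defInv}, it lies in $V$, so $P$ acts trivially on it and $|\tilde{\psi}\rangle$ is a genuine unit vector in $(\mathbbm{C}^2)^{\otimes 3}$. Moreover, because $A_i'$ and $B_j'$ annihilate $V$, the self-testing relations \eqref{qrealization1}--\eqref{qrealizations4} restrict to identical relations for $\hat{A}_i,\hat{B}_j$ acting on $P|\psi\rangle$. Conjugating each such relation by $U$ and using Lemma \ref{lemma3} then translates them into statements about Pauli operators acting on $|\tilde{\psi}\rangle$: for instance $A_1B_2B_3|\psi\rangle = |\psi\rangle$ becomes $(X\otimes Z\otimes Z)|\tilde{\psi}\rangle = |\tilde{\psi}\rangle$, which is precisely $G_1|\tilde{\psi}\rangle = |\tilde{\psi}\rangle$, and analogously for $G_2,G_3$ from the permuted relations \eqref{qrealization2}--\eqref{qrealization3}.

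To finish, I would simply appeal to Definition \ref{def:gs}: the three-qubit complete graph state $|G'\rangle$ in \eqref{FCg} is by construction the unique common $+1$-eigenvector of the stabilizers $G_1,G_2,G_3$ in \eqref{Stab3}. Hence $|\tilde{\psi}\rangle = |G'\rangle$ up to a global phase that can be absorbed into $U$, giving $U(P|\psi\rangle) = |G'\rangle$ and completing the theorem. Note that the fourth self-testing relation $A_1A_2A_3|\psi\rangle = -|\psi\rangle$ provides a consistency check (it follows automatically from $G_1G_2G_3 = -X_1X_2X_3$) rather than new information, which is reassuring.

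The only subtle point, and the place where care is needed, is the bookkeeping across the projection $P$: one must verify that the stabilizer equations truly transfer from $\mathcal{H}$ to $V$ without loss. This is where invariance of $V$ under all $A_i,B_j$ (Lemma \ref{invsubV}) is essential — without it, a composition such as $A_1B_2B_3$ acting on $|\psi\rangle$ could leave $V$ at an intermediate stage and the reduction to $\hat{A}_1\hat{B}_2\hat{B}_3$ would fail. Once this is in place the argument is essentially a direct computation, and the remaining steps are routine.
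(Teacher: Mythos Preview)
Your proposal is correct and follows essentially the same route as the paper's own proof: invoke Lemmas \ref{invsubV}--\ref{lemma3} to obtain the projection $P$ and unitary $U$ bringing the restricted observables to Pauli form, then transport the relations \eqref{qrealization1}--\eqref{qrealization3} through $U$ to recognise them as the stabilizer conditions \eqref{Stab3} and conclude $U(P\ket{\psi})=\ket{G'}$ by the uniqueness in Definition~\ref{def:gs}. If anything, your write-up is more careful than the paper's about the bookkeeping across $P$ (noting $\ket{\psi}\in V$ and that invariance of $V$ is what allows the product $A_1B_2B_3$ to restrict cleanly to $\hat{A}_1\hat{B}_2\hat{B}_3$).
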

\begin{proof}
A quantum state $\ket{\psi}$ that belongs to a Hilbert space $\mathcal{H}$ and a set of observables $A_i,B_j$ acting on $\mathcal{H}$ attain the maximal quantum violation of the inequality \eqref{Ineq3} if and only if they satisfy the set of Eqs. \eqref{qrealization1}-\eqref{qrealizations4}. The algebraic relations induced by this set of equations let us prove Lemmas \ref{invsubV}-\ref{lemma3} which imply that there exists a projection $P:\mathcal{H} \rightarrow V \cong \mathbbm{C}^8 $ and a unitary $U = U_1 (\mathbbm{1}_2 \otimes U_2) (\mathbbm{1}_2 \otimes \mathbbm{1}_2 \otimes U_3)$ acting on $V \cong \mathbbm{C}^8$ for which Eqs. (\ref{dupawolowa})
hold true.

From the above characterization of the observables we can infer the form of the state $\ket{\psi}$. Indeed, after plugging (\ref{Aitilden}) into the conditions (\ref{qrealization1}) one realizes that the latter are simply the stabilizing conditions of the graph state associated to the complete graph of three vertices given in Eq. (\ref{FCg}) and thus $U\ket{\psi}=\ket{G'}$. This completes the proof.
\end{proof}


A few comments are in order. The first is that the tensor product structure here is just a suitable mathematical tool we used to represent our results. We know that in composite quantum systems the Hilbert space of the whole system is a tensor product of the Hilbert spaces of the separate subsystems, however, it has to be clear that in Theorem \ref{Theo3qubit} we do not assume the whole system to be composite.

The second comment is that the certification statement made in Theorem 
\ref{Theo3qubit} involves a global unitary operation which means that 
any state from $V$ can in fact be chosen as the reference state, even 
a fully product one. Thus, Theorem (\ref{Theo3qubit}) cannot be understood as a certification of only the state, but rather as certification of a state and a set of measurements at the same time. Or, more precisely, it is a certification of how measurements act on the state or what the relation between a state and measurements is; this relation is basis independent.

One way to get rid of the above ambiguity is to assume that the quantum system
at hand is composed of spatially separated subsystems on which the verifier
can perform local measurements. Such an assumption allows 
them to use Bell non-locality to deduce the form of the state. For instance
for the GHZ state of three-qubit a self-testing statement based on the violation of the inequality (\ref{Ineq3}) was derived in Ref. \cite{Kan16}.

To illustrate the difference between contextuality and non-locality based certification let us consider another set of quantum observables
on $\mathbbm{C}^8$ defined as
\begin{eqnarray}\label{anotherset}
 A_1&=&X \otimes \mathbbm{1} \otimes \mathbbm{1},\quad A_2=\mathbbm{1} \otimes X \otimes Z,\quad A_3=\mathbbm{1} \otimes Z \otimes X,\nonumber\\ 
 B_1&=&Z \otimes \mathbbm{1} \otimes \mathbbm{1},\quad\, B_2=\mathbbm{1} \otimes Z \otimes \mathbbm{1},\hspace{0.1cm}\,\quad B_3=\mathbbm{1} \otimes \mathbbm{1} \otimes Z.\nonumber\\
\end{eqnarray}
Clearly, these observables, similarly to those in Eq. (\ref{AiBj}), satisfy the commutation and anticommutation relations in Eqs. (\ref{dupa}) and (\ref{anticom}).
Moreover, they give rise to the maximal violation of the inequality (\ref{Ineq3})
together with the graph state $\ket{G''}$ corresponding to the linear graph in Fig. (\ref{graphs1and2}). However, while both the graph states $\ket{G'}$ and $\ket{G''}$
are equivalent under local unitary operations and thus cannot be distinguished
within both approaches to self-testing, the sets of observables
in Eqs. (\ref{AiBj}) and (\ref{anotherset}) are certainly not; they are equivalent under global unitary operations. Thus, the maximal violation of the Bell inequality (\ref{Ineq3}) would allow one to distinguish between these two sets, while standard quantum contextuality does not allow to do that.

\section{A scalable inequality and Self-testing of multi-qubit graph states}\label{scalable}

In this section we design a family of noncontextuality inequalities which is scalable and aimed to certify multiqubit quantum systems. These inequalities are scalable since the number of measurements and correlators increase polynomially with the number of vertices of the respective graph state.  The inequality we propose in \eqref{Ineqn} generalizes the inequality given in \eqref{Ineq3} and has this inequality in the heart of the construction since the structure of the simplest inequality appears as the building blocks of the general construction. We prove that the inequalities are useful for certification purposes.

\subsection{Scalable noncontextuality inequalities}

First, let us consider a set of $2n$ observables denoted by $A_1,\dots,A_n$ and $B_1,\dots,B_n$. They are assumed to mutually commute except for pairs $A_i,B_i$ with $i \in \{ 1,\ldots,n \}$, that is,
\begin{equation}\label{commutn}
 [A_i,A_j]=[B_i,B_j]=[A_i,B_j]=0\qquad (i\neq j).
\end{equation}
We now describe our construction of the noncontextuality inequalities. We first consider a sum of $n$ expectation values of the form $C_i=\la B_1\ldots B_{i-1} A_i B_{i+1}\ldots B_n \ra $ for $i=1,\ldots,n$ which involve $n-1$ different $B_j$ observables and a single observable $A_i$. Then, for any choice of three out of $n$ such different correlators $C_i$, $C_j$ and $C_k$ $(i\neq j\neq k)$
we consider another correlator that we substract from the sum. It is given by
\begin{equation}
    \la B_1\ldots A_i\ldots A_j\ldots A_k\ldots B_n \ra
\end{equation}
and consists of three observables $A_i$, $A_j$ and $A_k$ and 
$n-3$ observables $B_m$ with $m\neq i,j,k $. In this way we 
obtain $n + \binom{n}{3}$ expectation values from which we construct
our noncontextuality inequality,
\begin{widetext}
\beq \label{Ineqn}
\hspace{-0.5cm}\mathcal{I}_n & = & \alpha_n\left(\la A_1B_2B_3B_4\ldots B_n \ra + \la B_1A_2B_3B_4\ldots B_n \ra + \la B_1B_2A_3B_4\ldots B_n \ra + \ldots  + \la B_1B_2B_3B_4\ldots A_n \ra\right) \nonumber \\
&& - \la A_1A_2A_3B_4\ldots B_n \ra - \la A_1A_2B_3A_4\ldots B_n \ra - \ldots  -
\la B_1\ldots B_{n-3}A_{n-2}A_{n-1}A_{n} \ra \le \eta_C^n < \eta_Q^n = \alpha_n n+\binom{n}{3}, 
\eeq
\end{widetext}
where the constant $\alpha_n=\binom{n-1}{2}$ has been added for further convenience. 

It is not difficult to see that for the case $n=3$ the above inequality reproduces the one in Eq. \eqref{Ineq3}. While, as already mentioned, for $n=3$ it is equivalent to the MABK inequality \cite{PhysRevA.46.5375,Uspekhi,Mer90} if the commutation relations between the observables are satisfied due to the spatial separation between the subsystems, for $n>3$ this is not the case. The number of terms in the MABK Bell-type inequalities grows exponentially with $n$, whereas in our noncontextuality inequalities this number scales only polynominally with $n$. In Refs. \cite{TGB06,GC08,BAS+20} other Bell-type inequalities were designed for the graph states which again scale exponentially or linearly, thus, they differ from our inequalities. Our inequalities (\ref{Ineqn}) are designed such that they are suitable for the purpose of certification.
It is also important to notice that our inequality is constructed in such a way that for every three different correlators that enter $\mathcal{I}_n$ with $+$ and the associated 'negative' one, the non-common observables appearing in all these four correlators adopt the compatibility structure from the simplest inequality for $n=3$. To illustrate this with an example let us consider the inequality for $n=4$, 
\begin{widetext}
\beq \label{Ineqn4}
\mathcal{I}_4 & = & 3\left(\la A_1B_2B_3B_4 \ra + \la B_1A_2B_3B_4 \ra + \la B_1B_2A_3B_4 \ra + \la B_1B_2B_3A_4 \ra \right)\nonumber \\
&& - \la B_1A_2A_3A_4 \ra - \la A_1B_2A_3A_4 \ra - \la A_1A_2B_3A_4 \ra - \la A_1A_2A_3B_4 \ra \le \eta^4_C < \eta^4_Q = 16.
\eeq
\end{widetext}
Fig. \ref{figures2and3} presents the compatibility structures of the common observables for two choices of such four-element subsets of expectation values in $\mathcal{I}_4$. The first subset consists of the first three terms with $+$ sign and the last one with $-$ sign, all containing observables $A_1$, $A_2$ and $A_3$, whereas the second set is composed of the first, the second and the fourth '$+$' terms in $\mathcal{I}_4$ and the third 'negative one', all of them containing $A_1$, $A_2$ and $A_4$. 

\begin{figure}
    \centering
    \includegraphics[scale=0.27]{comp_graph.pdf}
    \includegraphics[scale=0.27]{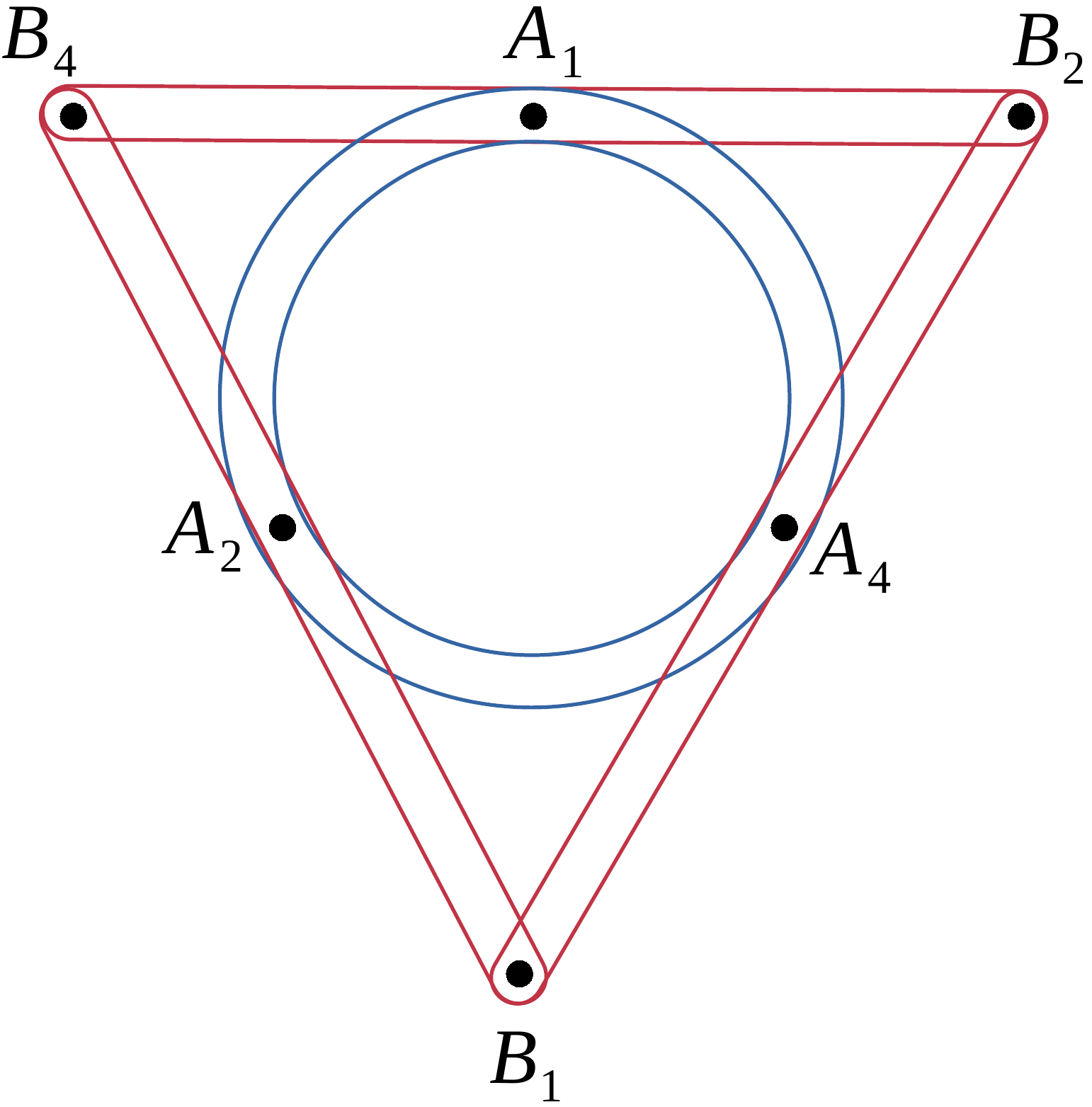}
    \caption{Hypergraphs of compatibility of subsets of observables related with two choices of subsets of correlators in Eq. (\ref{Ineqn4}). On the left we have the hypergraph associated with the subscripts 1, 2 and 3, and on the right side with the subscripts 1, 2 and 4.  These two hypergraphs also represent the compatibility structures of the observables corresponding to the simpler inequality (\ref{Ineq3}) and another such simpler inequality with the observables $A_i$ and $B_j$, with $i,j=1,2,4$, respectively. These two compatibility structures serve as the building blocks to construct the inequality (\ref{Ineqn4}).} \label{figures2and3}
   \end{figure}

The inequality \eqref{Ineqn} is non-trivial for any $n$, i.e., $\eta^n_C < \eta^n_Q$. 
To prove this statement, let us first notice that its quantum bound is $\eta^n_Q = n\alpha_n+\binom{n}{3}$ and can be attained 
by the following observables
\beq\label{obsn}
&A_i = X_i, \qquad
B_j = Z_j, &
\eeq
and the unique graph state $|G_n\ra$ associated to the complete graph of $n$ qubits and stabilized by the operators in Eq. (\ref{ngraph}).
In fact, plugging (\ref{obsn}) into the expression $\mathcal{I}_n$
one realizes that all correlators with $+$ correspond to 
the stabilizing operators $G_i$, whereas those entering 
$\mathcal{I}_n$ with minus sign correspond to products of triples of
different $G_i$'s.

Let us then estimate the maximal classical value $\eta_C^n$, 
and, for pedagogical purposes, we first consider
the case $n=4$ for which the expression $\mathcal{I}_4$ can be
stated as
\begin{widetext}
\begin{eqnarray}
    \mathcal{I}_4&=&(\la A_1B_2B_3B_4 \ra + \la B_1A_2B_3B_4 \ra + \la B_1B_2A_3B_4 \ra-\la A_1A_2A_3B_4 \ra)\nonumber\\
    &&+(\la A_1B_2B_3B_4 \ra + \la B_1A_2B_3B_4 \ra + \la B_1B_2B_3A_4 \ra - \la A_1A_2B_3A_4 \ra)\nonumber\\
    &&+(\la A_1B_2B_3B_4 \ra + \la B_1B_2A_3B_4 \ra + \la B_1B_2B_3A_4 \ra  - \la A_1B_2A_3A_4 \ra)\nonumber\\
    &&+( \la B_1A_2B_3B_4 \ra +\la B_1B_2A_3B_4 \ra + \la B_1B_2B_3A_4 \ra - \la B_1A_2A_3A_4 \ra),
\end{eqnarray}
\end{widetext}
where each line in the right hand side of the above equation corresponds to a lifting of $I_3$ to $n=4$. 
Due to the fact that in each line we have basically the inequality for 
$n=3$, it is not difficult to see that for noncontextual models, $|\mathcal{I}_4|\leq 4\times 2=8$, which is clearly smaller than the maximal quantum value
$\eta_Q^4=16$. To prove that the same holds true for any $n$, 
it suffices to notice that analogously to $\mathcal{I}_4$, 
$\mathcal{I}_n$ can be rewritten as a sum of $\binom{n}{3}$ 
terms that are liftings of $\mathcal{I}_3$, and thus $\eta_C^n\leq 2\binom{n}{3}$.
At the same time, $\eta_Q^n=n\binom{n-1}{2}+\binom{n}{3}$ and thus 
$\eta_Q^n>\eta_C^n$ for any $n\geq 3$.


\subsection{Certification based on the noncontextuality inequality}

Let us now show how the above inequality can be used for certification of 
the complete graph state and $n$ pairs of anticommuting observables. 
To this aim, we assume that a state $\ket{\psi}\in\mathcal{H}$
together with a set of $2n$ dichotomic observables $A_i$ and $B_i$ acting 
on $\mathcal{H}$ maximally violate (\ref{Ineqn}).
Then, as in the case $n=3$, this implies the following set of $n+\binom{n}{3}$ equations:
\begin{equation}\label{plusi}
    B_1\ldots B_{i-1}A_iB_{i+1}\ket{\psi}=\ket{\psi}
\end{equation}
with $i=1,\ldots,n$, and
\begin{equation}\label{minusi}
    B_1\ldots A_i\ldots A_j\ldots A_k\ldots B_n\ket{\psi}=-\ket{\psi}
\end{equation}
for any choice of $i,j,k=1,\ldots,n$ such that $i\neq j\neq k$.

%
As a consequence of these equations, we have
\begin{equation} \label{anticomut1}
A_1B_1 |\psi\ra = A_1 A_2 B_3 B_4\ldots B_n |\psi\ra = A_2 B_2 |\psi\ra, 
\end{equation}
%
%
where the first and the second equality stem from Eq. (\ref{plusi}) for $i=2$ and
Eq. (\ref{plusi}) for $i=1$, respectively. Then, by applying 
Eq. (\ref{minusi}) for $i=1$, $j=2$ and $k=3$, one obtains
\begin{equation} \label{anticomut2}
A_1B_1 |\psi\ra =  -B_3 A_3 |\psi\ra. 
\end{equation}
On the other hand, using Eq. (\ref{plusi}) with $i=3$ 
we can write
\begin{equation}\label{Wawel1}
    A_1B_1 |\psi\ra=A_1 A_3 B_2 B_4\ldots B_n |\psi\ra=A_3 B_3 |\psi\ra,
\end{equation}
where the second equation is a consequence of Eq. (\ref{plusi}) for $i=1$.
Simultaneously, an application of Eq. (\ref{minusi})
for $i=1$, $j=2$ and $k=3$ to the second terms in the above gives
\begin{equation}\label{Wawel2}
    A_1B_1 |\psi\ra=-B_2 A_2 |\psi\ra.
\end{equation}

Note that our inequality is designed in a such way that it is symmetric under any permutation of subscripts, i.e., it is invariant under the transformations  $A_i \leftrightarrow A_j$ together with $B_i \leftrightarrow B_j$. This when applied to  Eqs. (\ref{anticomut1})-(\ref{Wawel2}) results in the following relations
\beq \label{anticomutgeneral}
A_i B_i |\psi\ra = A_j B_j |\psi\ra = - B_i A_i |\psi\ra
\eeq
for $i,j \in \{1,2,\ldots ,n\}$. In particular, $A_i$ and $B_i$
anticommute, 
\begin{equation}\label{anticommutn}
    \{A_i,B_i\}\ket{\psi}=0 \qquad (i=1,\ldots,n).
\end{equation}

Having established the key relations between the state 
and the observables, let us now, analogously to 
the case $n=3$, identify a subspace of the Hilbert space $\mathcal{H}$ 
which is invariant under the action of all observables $A_i$
and $B_i$. The subspace is given by
\begin{eqnarray}\label{subInvodd2}
    V_{n}&=&\mathrm{span}\{\ket{\psi},B_{i_1}\ket{\psi},B_{i_1}B_{i_2}\ket{\psi},\ldots, B_{i_1}B_{i_2}\ldots B_{i_{k}}\ket{\psi},\nonumber\\
    &&\hspace{0.3cm}\ldots,B_{i_1}B_{i_2}\ldots B_{i_{n-1}}\ket{\psi},B_{1}\ldots B_{n}\ket{\psi}\},
\end{eqnarray}
where $i_j=1,\ldots,n$ for any $j$ and $i_1<i_2<\ldots<i_k<\ldots <i_{n-1}$.

For instance, in the simplest cases of $n=3$ and $n=4$, 
the above construction gives 
\begin{equation}
    V_3=\mathrm{span}\{\ket{\psi},B_i\ket{\psi},B_{i}B_{j}\ket{\psi},B_1B_2B_3\ket{\psi}\}
\end{equation}
and
\begin{equation}
    V_4\!=\!\mathrm{span}\{\ket{\psi},B_i\ket{\psi},B_iB_j\ket{\psi},B_iB_jB_k\ket{\psi},B_1B_2B_3B_4\ket{\psi}\}
\end{equation}
with $i,j,k=1,\ldots,n$ and $i<j<k$. In particular, 
$V_3$ is exactly the same as the one defined in Eq. 
(\ref{defInv}); due to Eq. (\ref{plusi}),
$B_iB_j\ket{\psi}=A_k\ket{\psi}$ with $i\neq j\neq k$
and $B_1B_2B_3\ket{\psi}=A_1B_1\ket{\psi}$.

Let us then notice that the number of vectors 
spanning $V_n$ is $2^n$. This is because each subset 
of vectors of the form $A_{i_1}\ldots A_{i_k}\ket{\psi}$ for 
$i_1<\ldots <i_k$ contains $\binom{n}{k}$ elements, 
and we have $n+1$ such subsets indexed by $k=0,\ldots,n$. 
Thus the total number of vectors can be
counted as 
\begin{equation}
\sum_{k=0}^{n}\binom{n}{k}=(1+1)^n=2^n, 
\end{equation}
meaning that 
%
 $   \dim V_n\leq 2^n$.
%
In fact, as we show later  $\dim V_n$ is exactly $2^n$.

Our aim now is to identify the form of the operators $A_i$ and $B_j$ projected onto the subspace $V_n$. The idea of the proof of self-testing is similar to those we used in the case $n=3$.

\begin{lemma}\label{invsubVn}
The subspace $V_n$ of $\mathcal{H}_n$ is invariant under the action of the operators $A_i$ and $B_j$ for $i,j=\{ 1,2,\ldots,n \}$.
\end{lemma}
\begin{proof}
It can be checked that the action of any operator $A_i$ or $B_j$ over all the set of $2^{n}$ elements that generate the subspace $V_n$ is, up to the factor $-1$, a permutation over this set. Indeed, 
the action of $B_m$ on vectors of the form 
$B_{i_1}\ldots B_{i_k}\ket{\psi}$ with $i_1<\ldots<i_k$ for $k=1,\ldots,n$ returns vectors of a similar form with $k\to k-1$ if $m$ equals one of the subscripts
$i_1,\ldots,i_k$, or with $k\to k+1$ otherwise. In both cases
the resulting vectors are already in $V_n$. 

Let us then consider the $A_i$ observables. Due to Eq. (\ref{plusi}) and taking
into account the commutation (\ref{commutn}) or anticommutation (\ref{anticommutn}) their 
action on the vectors spanning $V_n$ can always be represented as an action of 
a product of $n-1$ different $B_i$ observables. Thus, 
when applied to $B_{i_1}\ldots B_{i_k}\ket{\psi}$ with $i_1<\ldots<i_k$ 
for $k=1,\ldots,n$ they will again produce
vectors involving products of $B_i$ operators that are already in $V_n$. This completes the proof.
\end{proof}

%

This is a key step of our considerations because taking into account the fact that $A_i$ and $B_i$ are quantum observables, Lemma \ref{invsubVn} implies that they can be represented as a direct sum of two blocks, 
\begin{equation}
A_i = \hat{A}_i \oplus A_{i}',\qquad  B_j = \hat{B}_j \oplus B_{j}',
\end{equation}
where $\hat{A}_i$ and $\hat{B}_i$ are projections of $A_i$ and $B_i$ onto $V_n$, 
that is $\hat{A}_i=P_n A_i P_n$ and $\hat{B}_i=P_n A_i P_n$ with $P_n:\mathcal{H}_n\to V_n$
denoting the projector onto $V_n$. On the other hand, $A_{i}'$ and $B_{j}'$
are defined on the orthocomplement of $V_n$ in the Hilbert space $\mathcal{H}_n$ that we denote
$V_n^{\perp}$; clearly, $\mathcal{H}_n=V_n\oplus V_n^{\perp}$. 

Importantly, $A_i'$ and $B_i'$ act trivially on the subspace $V_n$, in particular $A_i'\ket{\psi}=B_i'\ket{\psi}=0$, and consequently
it is enough for our purposes  to characterize $\hat{A}_i$ and $\hat{B}_j$. Our first step to achieve this goal is to prove the following lemma.
\begin{lemma}\label{lemma2n}
$\{ \hat{A}_i,\hat{B}_i \} = 0$ for all $i\in\{ 1,\ldots,n \}$.
\end{lemma}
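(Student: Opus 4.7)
The plan is to mimic the $n=3$ argument from Lemma \ref{lemma2}: verify that $\{A_i,B_i\}$ annihilates every vector in the spanning set of $V_n$ given in Eq. (\ref{subInvodd2}). Since $V_n$ is invariant under both $A_i$ and $B_i$ by Lemma \ref{invsubVn}, on $V_n$ the operators $A_i,B_i$ coincide with $\hat A_i,\hat B_i$, and hence $\{A_i,B_i\}$ acts on $V_n$ as $\{\hat A_i,\hat B_i\}$. Thus it suffices to show that $\{A_i,B_i\}w=0$ for every $w$ of the form $B_{i_1}B_{i_2}\cdots B_{i_k}\ket{\psi}$ with $i_1<i_2<\cdots<i_k$.

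I would fix $i\in\{1,\ldots,n\}$ and split into two cases according to whether $i$ appears among $\{i_1,\ldots,i_k\}$. In the first case ($i\notin\{i_1,\ldots,i_k\}$), the commutation relations (\ref{commutn}) imply that both $A_i$ and $B_i$ commute with each factor $B_{i_j}$, so
\begin{equation}
\{A_i,B_i\}B_{i_1}\cdots B_{i_k}\ket{\psi}=B_{i_1}\cdots B_{i_k}\{A_i,B_i\}\ket{\psi}=0
\end{equation}
by the base relation (\ref{anticommutn}). In the second case ($i\in\{i_1,\ldots,i_k\}$), I use that all $B_{i_j}$'s mutually commute to move $B_i$ to the rightmost position, so that $w=\pm B_{j_1}\cdots B_{j_{k-1}}B_i\ket{\psi}$ with $j_\ell\neq i$. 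Since $A_i$ commutes with each $B_{j_\ell}$, one gets
\begin{equation}
\{A_i,B_i\}w=\pm B_{j_1}\cdots B_{j_{k-1}}\{A_i,B_i\}B_i\ket{\psi}.
\end{equation}
The remaining piece is then evaluated using $B_i^2=\mathbbm{1}$ together with $\{A_i,B_i\}\ket{\psi}=0$: indeed $\{A_i,B_i\}B_i\ket{\psi}=A_i\ket{\psi}+B_iA_iB_i\ket{\psi}=A_i\ket{\psi}-B_i^2A_i\ket{\psi}=0$, where I applied $A_iB_i\ket{\psi}=-B_iA_i\ket{\psi}$ once.

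Combining the two cases yields $\{A_i,B_i\}w=0$ for every spanning vector of $V_n$, hence $\{A_i,B_i\}$ vanishes identically on $V_n$, which (by the invariance of $V_n$) is equivalent to $\{\hat A_i,\hat B_i\}=0$. The argument is valid for every $i$, and the conclusion of the lemma follows. The main (and only) subtlety I anticipate is the bookkeeping in the second case: one needs to observe that the $B$'s commute among themselves so that $B_i$ can be brought adjacent to $\ket{\psi}$ without introducing signs, after which the base anticommutation at $\ket{\psi}$ together with $B_i^2=\mathbbm{1}$ closes the computation cleanly. Invoking the permutation symmetry of $\mathcal{I}_n$ provides an alternative route, but the direct case analysis above is already index-by-index and does not require it.
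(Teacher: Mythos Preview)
Your proof is correct and essentially identical to the paper's: both show that $\{A_i,B_i\}$ annihilates every spanning vector $B_{i_1}\cdots B_{i_k}\ket{\psi}$ of $V_n$ by the same two-case analysis (whether or not $i$ appears among the indices), using the commutation relations (\ref{commutn}) to move operators past the $B_{i_j}$'s and then the base anticommutation (\ref{anticommutn}) together with $B_i^2=\mathbbm{1}$ to finish. The only cosmetic differences are that the paper also mentions the Hermiticity of $\{A_i,B_i\}$ when passing to $\{\hat A_i,\hat B_i\}$ (not actually needed given the block decomposition you cite), and that your $\pm$ sign in Case~2 is in fact always $+$ since the $B$'s mutually commute.
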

\begin{proof}
In order to prove this statement we will show that $\{A_i,B_i\}V_n=0$, that is, 
all these anticommutators act trivially on any vector from $V_n$.
To this aim, let us investigate how $\{A_i,B_i\}$ acts 
on the vectors spanning $V_n$. We first see that $ \{ A_i,B_i \} |\psi\ra = 0$ as a direct consequence of Eq. \eqref{anticomutgeneral}. Let us then consider vectors $B_j\ket{\psi}$. If $i\neq j$, we can directly use the commutation relations (\ref{commutn}) to write
\begin{equation}
    \{A_i,B_i\}B_j\ket{\psi}=B_j\{A_i,B_i\}\ket{\psi}=0,
\end{equation}
On the other hand, if $i=j$, one has
\begin{equation}
    \{A_i,B_i\}B_i\ket{\psi}=(A_i+B_iA_iB_i)\ket{\psi}=0,
\end{equation}
where the last equality is again a consequence of the facts that $A_iB_i\ket{\psi}=-B_iA_i\ket{\psi}$
and that $B_i^2=\mathbbm{1}$. 

It is not difficult to realize that the above reasoning extends to 
any vector spanning the subspace $V_n$. Indeed, 
let us consider vectors of the form $B_{i_1}\ldots B_{i_k}\ket{\psi}$ with $i_1<\ldots<i_k$ for $k=1,\ldots,n$ and assume first that all $i_1,\ldots,i_k$
differ from $i$. Then, due to the commutation relations, one directly has
\begin{equation}
    \{A_i,B_i\}B_{i_1}\ldots B_{i_k}\ket{\psi}=B_{i_1}\ldots B_{i_k}\{A_i,B_i\}\ket{\psi}=0.
\end{equation}
On the other hand, if one of the subscripts, say $i_1$, equals $i$, then 
\begin{eqnarray}
    \{A_i,B_i\}B_{i_1}\ldots B_{i_k}\ket{\psi}&=&B_{i_2}\ldots B_{i_k}\{A_i,B_i\}B_1\ket{\psi}\nonumber\\
    &=&B_{i_2}\ldots B_{i_k}(A_i+B_iA_iB_i)\ket{\psi}\nonumber\\
    &=&0,
\end{eqnarray}
where the last equality follows from the fact that $A_i$
and $B_i$ anticommute when acting on $\ket{\psi}$ and from $B_i^2=\mathbbm{1}$. 

Taking into account that each $\{A_i,B_i\}$ is a Hermitian operator and 
that it acts trivially on the whole subspace $V_n$, 
one directly concludes that $\{\hat{A}_i,\hat{B}_i\}=0$. 
\end{proof}


Our next lemma is a straightforward generalization of Lemma \ref{lemma3}.
\begin{lemma}\label{lemma3n}
Suppose the maximal quantum violation of our inequality (\ref{Ineqn}) is observed.
Then, there exists a basis of $V_n$ for which
\beq \label{Aitilden}
\hat{A}_i &=& X_i, \qquad \hat{B}_j = Z_j.
\eeq
\end{lemma}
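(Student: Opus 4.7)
The plan is to mirror the proof of Lemma~\ref{lemma3} and turn it into an induction on the number of anticommuting pairs. The starting data are: (i) $\hat{A}_i,\hat{B}_i$ are involutions on $V_n$, (ii) they obey the same pairwise commutation relations (\ref{commutn}) as the unhatted operators, because the block decomposition $A_i=\hat{A}_i\oplus A_i'$, $B_j=\hat{B}_j\oplus B_j'$ preserves commutators, and (iii) $\{\hat{A}_i,\hat{B}_i\}=0$ on $V_n$ by Lemma~\ref{lemma2n}. From (iii) one immediately gets $\tr\hat{A}_i=\tr\hat{B}_i=0$, so $\dim V_n$ is even.

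First, I would unfold the base step: since $\hat{A}_1,\hat{B}_1$ are anticommuting traceless involutions on $V_n$, there exists a unitary $U_1$ on $V_n$ such that $U_1^\dagger\hat{A}_1 U_1=X\otimes\mathbbm{1}_{k_1}$ and $U_1^\dagger\hat{B}_1 U_1=Z\otimes\mathbbm{1}_{k_1}$ with $\dim V_n=2k_1$. Then for every $i\geq 2$, expanding $U_1^\dagger\hat{A}_i U_1$ and $U_1^\dagger\hat{B}_i U_1$ in the Pauli basis of the first tensor factor and invoking $[\hat{A}_i,\hat{A}_1]=[\hat{A}_i,\hat{B}_1]=0$ (and similarly for $\hat{B}_i$) forces every non-identity component on that factor to vanish, exactly as in Eq.~(\ref{form1}). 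Hence each of the remaining $2(n-1)$ operators takes the form $\mathbbm{1}_2\otimes M$ for some Hermitian involution $M$ on $\mathbbm{C}^{k_1}$, and the induced operators $M_i,N_i$ on $\mathbbm{C}^{k_1}$ inherit the full set of commutation relations (\ref{commutn}) and the anticommutation relations $\{M_i,N_i\}=0$ of Lemma~\ref{lemma2n}.

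Next, I iterate: apply the same argument to the pair $(M_2,N_2)$ on $\mathbbm{C}^{k_1}$ to produce a second unitary $U_2$ peeling off another qubit factor with $k_1=2k_2$, set $\hat{A}_2\mapsto \mathbbm{1}_2\otimes X\otimes\mathbbm{1}_{k_2}$ and $\hat{B}_2\mapsto\mathbbm{1}_2\otimes Z\otimes\mathbbm{1}_{k_2}$, and push the remaining $2(n-2)$ operators into $\mathbbm{1}_2\otimes\mathbbm{1}_2\otimes(\cdot)$ by the same Pauli-basis/commutation argument. After $n$ iterations one obtains a single unitary $U=U_1(\mathbbm{1}_2\otimes U_2)\cdots(\mathbbm{1}_{2^{n-1}}\otimes U_n)$ such that $U^\dagger\hat{A}_i U=X_i$ and $U^\dagger\hat{B}_i U=Z_i$ on a subspace of dimension $2^n$. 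Combined with the bound $\dim V_n\leq 2^n$ already established before the statement, this forces $\dim V_n=2^n$ and identifies $V_n\cong(\mathbbm{C}^2)^{\otimes n}$.

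The main obstacle is bookkeeping rather than a conceptual difficulty: one must verify that at every stage of the induction the reduced operators are still involutions, still satisfy the required pairwise (anti)commutations when restricted to the shrinking factor, and that the dimensions $k_j$ remain even so the next peeling step is possible. The first two are automatic from the block form inherited along the way, and the evenness of each $k_j$ follows from tracelessness of the next anticommuting pair, which in turn is a consequence of Lemma~\ref{lemma2n} applied on the restricted factor; the bound $\dim V_n\leq 2^n$ then guarantees the induction terminates consistently at $k_n=1$.
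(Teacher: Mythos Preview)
Your proposal is correct and follows essentially the same route as the paper: peel off one qubit at a time by using the anticommutation $\{\hat{A}_i,\hat{B}_i\}=0$ to bring the current pair to $X\otimes\mathbbm{1}$, $Z\otimes\mathbbm{1}$ via a unitary, then use the commutation relations and a Pauli-basis expansion to push the remaining operators into the $\mathbbm{1}\otimes(\cdot)$ factor, and iterate. The only cosmetic difference is that you phrase the recursion explicitly as an induction and spell out the bookkeeping, whereas the paper carries out the first two steps in detail and then declares the procedure iterates.
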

\begin{proof}
We will proceed recursively starting from the pair $\hat{A}_1$ and $\hat{B}_1$.
It follows from Lemma \ref{lemma2n} that $\{ \hat{A}_1,\hat{B}_1 \} = 0$ which means that the dimension $d=\dim V_n$ an even number, i.e., $d=2k$ for some $k=1,\ldots,2^{n-1}$
(recall that $d\leq 2^n$) and that there exists a unitary $U_1$ acting on $V_n$ such that
\beq
U_1^{\dagger}\, \hat{A}_1\, U_1 &=& X \otimes \mathbbm{1}_k, \\
U_1^{\dagger}\, \hat{B}_1\, U_1&=& Z \otimes \mathbbm{1}_k,
\eeq
where $\mathbbm{1}_k$ is an identity acting on $\mathbbm{C}^{k}$.

Next, to determine the remaining observables $\hat{A}_i$ and $\hat{B}_i$
we exploit the fact that they all must commute with both $\hat{A}_1$ and $\hat{B}_1$.
With this aim, we use the fact that $V_n=\mathbbm{C}^2\otimes\mathbbm{C}^k$ to decompose $\hat{A}_i$ and $\hat{B}_i$ $(i=2,\ldots,n)$ in terms of the Pauli basis as
\begin{equation}
   U_1^{\dagger}\, \hat{R}_i\, U_1=\mathbbm{1}\otimes M_0^R+X\otimes M_1^R+Y\otimes M_2^R+Z\otimes M_3^R,
\end{equation}
where $R=A,B$, and $M_i^R$ are some Hermitian matrices acting on $\mathbbm{C}^k$.
Now, $[\hat{R}_i,\hat{A}_1]=0$ implies that $M_2^R=M_3^R=0$, whereas
from $[\hat{R}_i,\hat{B}_1]=0$ one concludes that $M_1^R=0$. As a result,
all $\hat{A}_i$ and $\hat{B}_i$ with $i=2,\ldots,n$ admit
the following representation,
\begin{equation}
  U_1^{\dagger}\,  \hat{A}_i\,U_1=\mathbbm{1}_2\otimes M_i,\qquad U_1^{\dagger}\, \hat{B}_i\, U_1=\mathbbm{1}_2\otimes N_i,
\end{equation}
where $M_i$ and $N_i$ act on $\mathbbm{C}^k$; in fact, they are 
Hermitian and obey $M_{A}^2=M_B^2=\mathbbm{1}_k$, and thus are quantum observables.
Moreover, it follows from Lemma \ref{lemma2n} that $\{M_i,N_i\}=0$ for all $i=2,\ldots,n$. 

We have thus a set of $2(n-1)$ quantum observables $M_i$ and $N_i$ that satisfy the same commutation and anticommutation relations as $\hat{A}_i$ and $\hat{B}_i$, 
and therefore we can use the above reasoning again to
conclude that the dimension $k$ is even, that is, $k=2k'$ for some $k'=1,\ldots,2^{n-4}$, and that there is a unitary operation $U_2:\mathbbm{C}^k\to \mathbbm{C}^k$ such that 
\begin{eqnarray}
    U_2^{\dagger}\,M_2\, U_2=X\otimes \mathbbm{1}_{k'},\qquad 
    U_2^{\dagger}\,N_2\, U_2=Z\otimes \mathbbm{1}_{k'},
\end{eqnarray}
where $\mathbbm{1}_{k'}$ is an identity acting on $\mathbbm{C}^{k'}$.
We then use the fact that the other operators $M_i$ and $N_i$ with $i=3,\ldots,n$ commute with both $M_2$ and $N_2$ to see that they are of the form
$M_i=\mathbbm{1}_2\otimes P_i$ and $N_i=\mathbbm{1}_2\otimes Q_i$, where
$P_i$ and $Q_i$ are quantum observables acting on $\mathbbm{C}^{k'}$.

It is now clear that the above procedure can be applied iteratively many times
until all the observables are proven to be of the form (\ref{Aitilden}), 
of course, up to certain unitary operation. In fact, one finds that 
$V_n=(\mathbbm{C}^2)^{\otimes n}$, that is, it is an $n$-qubit Hilbert space. 
Moreover, there is a unitary operation $U$ composed of all the 
intermediate unitary operations $U_i$ such that 
\begin{equation}
    U^{\dagger}\hat{A}_i\, U = X_i,\qquad U^{\dagger}\hat{B}_i\, U = Z_i
\end{equation}
for any $i=1,\ldots,n$.
\end{proof}

One of the main messages that one takes from this lemma is that 
the dimension of $V_n$ is exactly $2^n$; in other words, $V_n$ is 
isomorphic to an $n$-qubit Hilbert space. In this sense our 
inequalities can be seen as dimension witnesses: 
the dimension of the Hilbert space supporting a state and observables giving rise to the maximal violation of our inequalities must be at least $2^n$.
Moreover, the above lemma implies that 
a set of $n$ pairs of anti-commuting quantum observables with 
outcomes $\pm1$ that satisfy the commutation relations (\ref{commutn}) 
can always be represented, up to a single unitary operation, as a
tensor product of single-qubit operators (\ref{Aitilden}).
We have thus arrived at our main result.

\begin{thm}\label{Theonqubit}
If a quantum state $|\psi \ra$ and a set of dichotomic observables $A_i$ and $B_j$ with $i,j=\{ 1,2,\ldots,n \}$ give rise to maximal violation of the inequality \eqref{Ineqn}, then there exist a projection $P_n:\mathcal{H}_n \rightarrow \mathbbm{C}^{2^n}$ and a unitary $U$ acting on $\mathbbm{C}^{2^n}$ such that
\beq
U^\dagger (P\, A_i\, P^\dagger) U &=& \hat{A}_i, \\
U^\dagger (P\, B_j\, P^\dagger) U &=& \hat{B}_j, \\
U (P|\psi \ra) &=& |G_n \ra,
\eeq
where $\hat{A}_i$ and $\hat{B}_j$ are defined in Eq. \eqref{Aitilden} and $|G_n \ra$ is the complete graph state of $n$ qubits. 
\end{thm}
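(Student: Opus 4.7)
The plan is to follow exactly the same strategy as in the proof of Theorem \ref{Theo3qubit}, only now leveraging Lemmas \ref{invsubVn}, \ref{lemma2n} and \ref{lemma3n} which have already been established for arbitrary $n$. First, I would observe that maximal violation of \eqref{Ineqn} is equivalent to the set of eigenvector equations \eqref{plusi} and \eqref{minusi}, since each correlator appearing in $\mathcal{I}_n$ is a product of $\pm 1$-valued observables and therefore saturates $\pm 1$ on $\ket{\psi}$ only when $\ket{\psi}$ is the corresponding $\pm 1$ eigenvector. From these, the earlier manipulations (cf.~\eqref{anticomut1}--\eqref{Wawel2}) give the anticommutation relations \eqref{anticommutn} on $\ket{\psi}$, which are the input to Lemmas \ref{invsubVn}--\ref{lemma3n}.

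Next, I would invoke Lemma \ref{invsubVn} to conclude that the subspace $V_n$ is invariant under every $A_i$ and $B_j$, so that each observable decomposes as $A_i=\hat{A}_i\oplus A_i'$ and $B_j=\hat{B}_j\oplus B_j'$ with $\hat{A}_i=P_nA_iP_n$ and $\hat{B}_j=P_nB_jP_n$, where $P_n:\mathcal{H}_n\to V_n$ is the projector. Because $A_i'$ and $B_j'$ act trivially on $V_n$ and $\ket{\psi}\in V_n$, the entire observed statistics comes from the hatted part. Lemma \ref{lemma3n} then supplies a unitary $U:V_n\to(\mathbbm{C}^2)^{\otimes n}$ such that $U^\dagger\hat{A}_iU=X_i$ and $U^\dagger\hat{B}_iU=Z_i$, which in particular forces $\dim V_n=2^n$, so $V_n\cong\mathbbm{C}^{2^n}$. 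This is precisely the conclusion required for the operator statements of the theorem, namely $U^\dagger(PA_iP^\dagger)U=\hat{A}_i$ and $U^\dagger(PB_jP^\dagger)U=\hat{B}_j$ with $P=P_n$.

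The only remaining point is the identification of the state. Applying $U$ to \eqref{plusi} and using the Lemma \ref{lemma3n} representation, each equation becomes
\begin{equation}
Z_1\ldots Z_{i-1}X_iZ_{i+1}\ldots Z_n\,(U P_n\ket{\psi})=UP_n\ket{\psi},\qquad i=1,\ldots,n,
\end{equation}
which are exactly the $n$ stabilizing conditions \eqref{ngraph} defining the $n$-qubit complete graph state $\ket{G_n}$. Since $\ket{G_n}$ is by Definition \ref{def:gs} the unique common $+1$ eigenvector of these $n$ mutually commuting stabilizers, one concludes $U(P_n\ket{\psi})=\ket{G_n}$ (up to an irrelevant global phase that may be absorbed into $U$). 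The ``minus'' equations \eqref{minusi} are then automatic, since they correspond to products of three different stabilizers, whose product eigenvalue on $\ket{G_n}$ is the product of three $+1$'s times an overall $-1$ coming from the sign convention in the construction of $\mathcal{I}_n$.

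The main obstacle I anticipate is not in the logic, which is a clean book-keeping exercise once Lemmas \ref{invsubVn}--\ref{lemma3n} are in place, but rather in making sure that the iterative unitary $U=U_1(\mathbbm{1}_2\otimes U_2)\cdots(\mathbbm{1}_{2^{n-1}}\otimes U_n)$ constructed in the recursive proof of Lemma \ref{lemma3n} is the same unitary that maps $P_n\ket{\psi}$ to $\ket{G_n}$; but since the stabilizing equations \eqref{plusi} are intrinsic to the state--observable pair and the graph state is uniquely determined by its stabilizers, no additional freedom survives, and the theorem follows.
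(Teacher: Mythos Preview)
Your proposal is correct and follows essentially the same approach as the paper's own proof: both argue that maximal violation forces the eigenvector equations \eqref{plusi}--\eqref{minusi}, invoke Lemmas \ref{invsubVn}--\ref{lemma3n} to obtain the projection $P_n$ and the unitary $U$ bringing the projected observables to $X_i,Z_i$, and then identify $U P_n\ket{\psi}$ with $\ket{G_n}$ via the stabilizer conditions. Your write-up is in fact slightly more explicit than the paper's about why the state is uniquely pinned down (uniqueness of the common $+1$ eigenvector of the stabilizers) and why the ``minus'' equations \eqref{minusi} are automatically satisfied; the concern you raise about the iteratively built unitary is a non-issue, since only its existence and conjugation property are used, not its particular factorized form.
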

\begin{proof}
The state $|\psi \ra\in\mathcal{H}_n$ and observables $A_i,B_j$ acting on the Hilbert space $\mathcal{H}_n$ attain the maximal quantum violation of the inequality \eqref{Ineqn} if, and only if, they satisfy the set of $n+\binom{n}{3}$ equations  \eqref{plusi} and \eqref{minusi}. The algebra induced by this set of equations allows us to prove Lemmas \ref{invsubVn}, \ref{lemma2n} and \ref{lemma3n}; in particular, it follows that there exists a projection $P_n:\mathcal{H} \rightarrow V_n \cong \mathbbm{C}^{2^n} $ and a unitary $U$ acting on $V_n$ such that
\beq
U^\dagger (P\, A_i\, P^\dagger) U &=& X_i, \\
U^\dagger (P\, B_j\, P^\dagger) U &=& Z_j,
\eeq
In this way, the products of observables that appear in the first $n$ correlators  in the inequality \eqref{Ineqn} give stabilizing operators of the graph state associated to the complete graph with $n$ vertices, whereas the correlators with negative sign correspond to products of three different stabilizing operators for which the graph state $\ket{G_n}$ is an a eigenvector with associated eigenvalue $-1$. Thus the complete graph state will be the unique state that attain the maximal quantum violation, then
\beq
U (P|\psi \ra) = |G_n \ra.
\eeq
This completes the proof.
\end{proof}

\section{Robustness}\label{Robustness}
Here we obtain  fidelity bounds on the state and measurements leading to the given  nonmaximal violation of 
the inequality \eqref{Ineqn} to demonstrate that our scheme is robust  to
errors and experimental imperfections. 
For simplicity, let us focus on the case of $n=3$. 
Let us say that the maximal quantum violation of the inequality (\ref{Ineq3}) is observed with an $\epsilon$-error, i.e., 
a nonmaximal violation of $4-\epsilon$ is observed. Then, the correlators satisfy 
the following bounds:
\begin{align}\label{eq: Error statistics n=3}
\begin{split}
    \bra{\psi}A_1B_2B_3 \ket{\psi}&\ge 1-\epsilon,\\
    \bra{\psi}B_1A_2B_3 \ket{\psi}&\ge 1-\epsilon,\\
    \bra{\psi}B_1B_2A_3 \ket{\psi}&\ge 1-\epsilon,\\
    -\bra{\psi}A_1A_2A_3 \ket{\psi}&\ge 1-\epsilon,
\end{split}    
\end{align}
for some  $\epsilon > 0$. We demonstrate that for a small enough value of $\epsilon$, the quantum realization is close enough to the optimal quantum realization which leads to the maximal quantum violation of the inequality. This is the purpose of robustness analysis that will be presented here, i.e., we show that in the limit $\epsilon \rightarrow 0$ the quantum realization is close to the optimal one.

In the presence of errors, it is not straightforward to guarantee the existence of
an invariant subspace under the action of the operators $A_i$ and $B_j$, as we have in the self-testing of the optimal quantum realization. However, the robustness of the protocol can be demonstrated, in a similar way to that of Ref. \cite{IMO+20}, by proving the existence of an eight-dimensional ideal subspace $\hat{V}$ together with a state $\ket{\hat{\psi}}\in\hat{V}$ and observables $\hat{A}_i$ and $\hat{B}_i$ acting on it
such that their fidelities with the actual state and measurements approach one in the limit of $\epsilon\to 0$.
We define the state fidelity as $F(|\hat{\psi} \ra,|\psi\ra):=|\langle\hat{\psi} | \psi \ra |^2$, and the operator fidelity as $F(\hat X_i,X_i):=(1/8)\mathrm{Tr}(\hat X_i X_i)$  (with $X_i=A_i$),  where the $1/8$ factor is used to normalize the fidelity since $\mathrm{Tr}(\hat X_i X_i) \le 8 $,
and similarly defined between $\hat B_j$ and $B_j$. 
%
%
%
%
Formally, we have the following theorem.
\begin{thm}\label{thm robust}
Suppose a quantum state $|\psi \ra$ and a set of measurements $A_i,B_j$ with $i,j=\{ 1,2,3 \}$ in a Hilbert space $\cH$ satisfy the ideal expectations corresponding to the maximal quantum violation of the inequality \eqref{Ineq3} to within error     $\epsilon$.  Then there exists a projection $P:\cH \rightarrow \hat V$, where $\dim(\hat{V})=8$, a state $|\hat{\psi} \ra \in \hat{V}$ and $\hat A_i$, $\hat{B}_j$ which are Hermitian involutions acting on $\hat{V}$ for all $i$ and $j$ such that 
\beq \label{qrealization1n}
\la \hat \psi |  \hat A_1 \hat B_2 \hat B_3 |\hat\psi\ra &=&  1 \nonumber \\ 
\la \hat \psi | \hat B_1 \hat A_2 \hat B_3 |\hat\psi\ra &=&  1 \nonumber \\ 
\la \hat  \psi | \hat B_1 \hat B_2 \hat A_3  |\hat\psi\ra &=&  1 \nonumber
\eeq
and there exists some unitary $U$ acting on $ \hat V $ such that 
\begin{align*}
     F(U | \hat \psi\rangle  ,| \psi \ra )&\ge 1- \epsilon_0,\\
    F( U \hat A_{i}  U^{\dagger}, A_{i})&\ge 1- \epsilon_1 \quad \forall i, \\
    F( U  \hat B_{j}  U^{\dagger}, B_{j})&\ge 1- \epsilon_2 \quad \forall i , 
\end{align*}
where  $\epsilon_0=25 \epsilon$, $\epsilon_1=0$, $\epsilon_2=4 \epsilon$.
\end{thm}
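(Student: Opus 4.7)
The plan is to adapt the exact self-testing argument of Theorem \ref{Theo3qubit} to the noisy setting by converting each operator identity used in Lemmas \ref{invsubV}--\ref{lemma3} into a vector-norm bound, and to follow the strategy of Ref.~\cite{IMO+20} for constructing the ideal subspace $\hat V$ together with the projection $P$ and unitary $U$.

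First, I would translate the four statistical bounds \eqref{eq: Error statistics n=3} into operator-level approximate identities. Since each $A_i$ and each $B_j$ is an involution, $\la\psi| A_1 B_2 B_3|\psi\rangle \ge 1-\epsilon$ yields via Cauchy--Schwarz
\begin{equation}
\bigl\|(A_1 B_2 B_3 - \mathbbm{1})|\psi\rangle\bigr\|^2 \le 2\epsilon,
\end{equation}
and analogously for the other two positive terms, while the fourth relation gives $\|(A_1A_2A_3 + \mathbbm{1})|\psi\rangle\|^2 \le 2\epsilon$. Re-running the algebra of \eqref{preSym}--\eqref{symm1} in approximate form, using the exact commutation relations \eqref{dupa} (which come from the compatibility structure and hold with no error) together with the triangle inequality, I then obtain $\|\{A_i,B_i\}|\psi\rangle\| \le c_i\sqrt{\epsilon}$ for explicit constants $c_i$. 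These are the only statistical inputs.

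Second, I would construct the 8-dimensional ideal subspace $\hat V$. Because $A_1,A_2,A_3$ mutually commute and square to the identity, they are simultaneously diagonalizable with joint spectrum in $\{\pm 1\}^3$. I would pick $\hat V \subset \mathcal{H}$ as the direct sum of one chosen eigenline for each of the eight joint eigenvalue patterns (following Ref.~\cite{IMO+20}, one can equivalently start from the span in \eqref{defInv} and use the approximate relations to show it is effectively $8$-dimensional). I let $P$ be the orthogonal projection onto $\hat V$, identify $\hat V \cong (\mathbbm{C}^2)^{\otimes 3}$ via a unitary $U$ chosen so that $U^\dagger (A_i|_{\hat V}) U = X_i$, and \emph{define} $\hat A_i := X_i$, $\hat B_j := Z_j$, and $|\hat\psi\rangle := |G'\rangle$ from \eqref{FCg}. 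By construction, the three ideal stabilizer expectations in the statement equal $1$ exactly, as required.

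Third, I would verify the three fidelity bounds. The $\hat A_i$ bound is automatic: because $U$ was chosen to align $\hat A_i$ with $A_i$ on $\hat V$, one has $\mathrm{Tr}(U\hat A_i U^\dagger\, A_i) = \mathrm{Tr}_{\hat V}(A_i|_{\hat V})^2 = 8$, so $\epsilon_1 = 0$. For the $\hat B_j$ bound, the approximate anticommutation $\|\{A_j,B_j\}|\psi\rangle\|^2 = O(\epsilon)$ together with the exact commutation of $B_j$ with $A_i$ for $i\neq j$ forces the restriction $PB_jP$ to be within $O(\sqrt\epsilon)$ of $U Z_j U^\dagger$ in operator-norm on $\hat V$, and a trace estimate yields $(1/8)\mathrm{Tr}(U\hat B_j U^\dagger\, B_j) \ge 1 - 4\epsilon$. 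Finally, the state fidelity comes from observing that $U|\hat\psi\rangle$ is, by construction, the unique vector in $\hat V$ that is simultaneously a $+1$-eigenvector of $U\hat A_1\hat B_2 \hat B_3 U^\dagger$, $U\hat B_1 \hat A_2\hat B_3 U^\dagger$, $U\hat B_1\hat B_2\hat A_3 U^\dagger$, and a $-1$-eigenvector of $U\hat A_1\hat A_2 \hat A_3 U^\dagger$; the actual $|\psi\rangle$ satisfies all four of these conditions up to vector-norm error $\sqrt{2\epsilon}$, and combining them via the stabilizer projector onto $|\hat\psi\rangle$ yields $|\la U\hat\psi | \psi\rangle|^2 \ge 1 - 25\epsilon$.

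The main obstacle is obtaining the sharp constants. The identity $\epsilon_1 = 0$ is forced by the freedom to choose the isomorphism $\hat V \cong (\mathbbm{C}^2)^{\otimes 3}$ aligned with the exact common eigenbasis of $A_1,A_2,A_3$, but the constants $4$ and $25$ require careful bookkeeping through the triangle inequality when accumulating the four stabilizer errors and through the trace estimate tying $PB_jP$ to $UZ_jU^\dagger$; loose application of Cauchy--Schwarz at any step would blow these constants up by further factors of $\sqrt{2}$ or $2$.
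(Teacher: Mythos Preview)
Your outline captures the high-level logic, but it misses the central technical ingredient of the paper's proof and, as written, the construction of $\hat V$ does not work. The paper's argument rests on the extension of Jordan's lemma (Corollary~\ref{corrNonideal}): because the three products $A_iB_i$ mutually commute, the whole Hilbert space splits as $\mathcal H=\bigoplus_l\mathcal H_l$ with $\dim\mathcal H_l\le 8$, and \emph{all six} observables act block-diagonally with the parametrization $A_{i_l}=X_i$, $B_{i_l}=\cos\theta_{i_l}Y_i+\sin\theta_{i_l}X_i$. The ideal subspace is then $\hat V=\mathrm{span}\{\sum_l\sqrt{p_l}\,|abc\rangle_l\}$, built from the state's weight $p_l$ on each Jordan block. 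This is what makes the operator fidelity for $B_j$ computable: on each block $B_{j_l}$ differs from the ideal $Y_j$ by the single angle $\theta_{j_l}$, so $F(\hat B_j,B_j)=\sum_lp_l\cos\theta_{j_l}$, and the approximate anticommutation on $|\psi\rangle$ translates directly into $\sum_lp_l\sin^2\theta_{j_l}\le 4\epsilon$.

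Your proposed construction---choosing one eigenline of the commuting triple $A_1,A_2,A_3$ for each of the eight sign patterns---does not achieve this. The operator $B_j$ need not map your chosen eigenline in one eigenspace into your chosen eigenline in another; it only satisfies an \emph{approximate} anticommutation with $A_j$ \emph{on the state} $|\psi\rangle$, not as an operator identity. Consequently $PB_jP$ can be far from any $UZ_jU^\dagger$ in operator norm on $\hat V$, and the trace estimate you invoke in the third step has no foothold. The alternative you mention in parentheses (using the span \eqref{defInv}) has the same defect: that subspace is not $B_j$-invariant in the noisy case, so there is no control over $PB_jP$ beyond its action on $|\psi\rangle$ itself. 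Without Jordan's lemma (or an equivalent simultaneous block-diagonalization of all six observables), you cannot promote the state-dependent relation $\|\{A_j,B_j\}|\psi\rangle\|\le 4\sqrt{2\epsilon}$ to the operator-level bound $F(\hat B_j,B_j)\ge 1-4\epsilon$. The state-fidelity constant $25$ likewise does not fall out of a stabilizer-projector argument; in the paper it arises from bounding each coefficient pair $|c^{(l)}_{abc}|^2+|c^{(l)}_{\bar a\bar b\bar c}|^2$ separately via an auxiliary norm estimate (Lemma~\ref{lemma: deg 4 norm bound}), giving $3\cdot 8+1=25$.
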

The proof of the above theorem is given in Appendix \ref{robustproof}. This theorem implies that there exist a subspace in which, up to a small enough error, the quantum realization leading to the nonmaximal quantum violation is close to the optimal quantum realization up to a unitary. For instance, an error of $0.1 \%$ in each expectation value implies that the state fidelity is not less than $97.5 \%$ and the operator fidelities of $B_j$ are not less than $99.6 \%$.
Following the proof of Theorem \ref{thm robust}, one can also obtain the fidelity bounds for any $n$ demonstrating the robustness
of the scheme similarly as presented in Theorem \ref{thm robustn}.

In order to obtain a tight self-testing bound that is applicable to more noisy practical scenario in which the given nonmaximal violation is not almost perfect, one may employ numerical method to bound the state fidelity as a function of violation of the noncontextual inequality based on semidefinite programming relaxations of quantum contextual sets introduced in Ref. \cite{CFW21} or  analytical method based on operator inequalities introduced in Ref. \cite{Kan16}.

\section{Conclusions}

Quantum contextuality provides a notion of nonclassicality for single systems. Motivated by extending the task of self-testing based on Bell inequalities to scenarios where entanglement is not necessary or spatial separation between the subsystems is not required, self-testing of quantum devices based on quantum contextuality has recently been explored. In this work we have followed this research direction and have introduced a family of inequalities revealing quantum contextuality and have shown that they can be used for certification of multiqubit quantum systems.
An interesting feature of our scheme is that it is scalable: the amount of information about the observed nonclassical correlations needed to certify the underlying quantum system grows only polynomially with the number of qubits that are certified.

Such contextuality-based certification schemes rely, however, on compatibility 
relations between the involved measurements and are thus generally difficult to implement in practice. A natural follow-up of this work would be therefore to 
design a scheme for certification built on our inequalities that does not rely on 
the compatibility relations, but rather allows one to deduce 
them from the observed nonclassicality. 
Such schemes for single quantum systems have recently been 
proposed in Refs. \cite{SSA20,MMJ+21} within the sequential measurements
or temporal correlations scenario. It would be thus interesting 
to see whether our results can be mapped to this scenario. 
Another possible direction for further research is to 
explore whether one can improve the scalability of our scheme 
with the number of the certified qubits. From a general perspective 
it is a highly nontrivial question what is the minimal information
about the observed nonclassical correlations that enables making
nontrivial statements about the underlying quantum system.

\section*{acknowledgments}

We thank Debashis Saha for discussions. We are also indebted to Owidiusz Makuta for providing us the example presented at the end of Section III.
This work was supported by the Foundation for Polish Science
through the First Team project (First TEAM/2017-
4/31) co-financed by the European Union under the European
Regional Development Fund.

\bibliography{bibliography}
\appendix

\section{Proof of Theorem \ref{thm robust}}\label{robustproof}
Here we present the derivation of robustness of the scheme  given in Theorem \ref{thm robust}. This will be similar to the proof of robustness of the certification scheme given in 
Ref. \cite{IMO+20}. 

For providing robust self-testing statements in the Bell tests where dichotomic measurements are implemented \cite{Kan16}, Jordan decomposition  for the state and measurements has been employed  to simplify the derivation. In Ref. \cite{IMO+20}, Jordan decomposition has also been extended to the contextuality scenario to provide robust certification of the two-qubit system. In Appendix~\ref{Appendix: Jordan}, we extend this Jordan decomposition to our scenario to provide the robust certification.  According to this decomposition, we can decompose the Hilbert space $\cH$
in which the state and measurements leading to the violation of our inequality acting  as 
\begin{equation} \label{Jdec}
\cH=\bigoplus_l\cH_l, 
\end{equation}
where each $\cH_l$ has dimension at most eight and invariant under the action of
$A_{i}$ and $B_{j}$. 

With respect to the decomposition (\ref{Jdec}), 
the state $\ket{\psi}$ can be written as 
\begin{equation}
    \ket{\psi}= \sum_l \sqrt{p_l}  \ket{\psi_l},
\end{equation} 
where  $\ket{\psi_l}  \in \cH_l$ 
and $\sum_l p_l =1$. We express each $\ket{\psi_l}$ 
in the computational basis $\{\ket{abc}_l: a,b,c \in \{0,1\} \}$  as
\begin{equation} \label{nonidealstate}
     \ket{\psi_l}=\sum_{a, b,c=0,1} c^{(l)}_{abc}\ket{a b c}_l, 
\end{equation}
where $c^{(l)}_{abc}$ satisfies
$\sum_{a,b,c}|c^{(l)}_{abc}|^2=1$.

We define the subspace $\hat V \subseteq \cH $ as the linear span of $\{\ket{\tilde a \tilde b \tilde c} : = \sum_l \sqrt{p_l} \ket{abc}_l \}$.
We define the ideal state in this subspace  as
\begin{equation} \label{IdealStPr}
 \ket{\hat \psi } :=  \frac{1}{\sqrt{2}}\big(\ket{\tilde 0 \tilde 0 \tilde 0  }-\ket{\tilde 1 \tilde 1 \tilde 1}\big),
\end{equation}
which can be re-expressed as 
\begin{equation}
    \ket{\hat \psi } =  \sum_l \sqrt{p_l}  \ket{\hat \psi_l},
\end{equation}
where 
\begin{equation}\label{ProjIdealstate}
\ket{\hat \psi_l}:=\frac{1}{\sqrt{2}}(\ket{000}_l-\ket{111}_l).
\end{equation}
Note that for the ideal observables defined as 
\begin{align}\label{JordanObsn3}
\hat A_1&:=\bigoplus_l (\hat A_{1_l}\otimes\mathbbm{1}_l \otimes\mathbbm{1}_l ), \quad \hat B_1:=\bigoplus_l  (\hat B_{1_l}\otimes\mathbbm{1}_l \otimes\mathbbm{1}_l ) \nonumber \\
\hat A_{2}&:=\bigoplus_l (\mathbbm{1}_l\otimes \hat A_{2_l} \otimes\mathbbm{1}_l ), \quad  \hat B_{2}:=\bigoplus_l (\mathbbm{1}_l\otimes \hat B_{2_l} \otimes\mathbbm{1}_l ) \nonumber \\
\hat A_{3}&:=\bigoplus_l (\mathbbm{1}_l\otimes\mathbbm{1}_l\otimes \hat A_{3_l}  ), \quad \hat B_{3}:=\bigoplus_l (\mathbbm{1}_l \otimes\mathbbm{1}_l \otimes \hat B_{3_l}  ), \nonumber
\end{align}
where
\begin{align}
    \hat A_{i_l}&= X_i,\quad \hat  B_{j_l}= Y_j,
\end{align}
where $X_i$ and $Y_j$ are the Pauli operators acting on the $i$th qubit,
the ideal state defined in Eq. (\ref{IdealStPr}) violates the noncontextuality inequality (\ref{Ineq3}) maximally. Note that the ideal state projected onto $\cH_l$, i.e., 
$\ket{\hat \psi_l}$ has the form of the GHZ state (\ref{GHZstate}).  We have chosen the above particular form of the ideal state for our convenience.

We now express the nonideal observables with respect to the Jordan decomposition. 
Due to the fact that we have three pairs of dichotomic observables that do not commute on the quantum state, which is a consequence of Lemma \ref{prop: Anticommutativity}, the dimension of each of the subspace $\cH_l $ in Eq. (\ref{Jdec}) can be taken to be eight. 
From Corollary \ref{corrNonideal}, it follows that 
\begin{align}
A_1&=\bigoplus_l (A_{1_l}\otimes\mathbbm{1}_l \otimes\mathbbm{1}_l ), \quad B_1=\bigoplus_l (B_{1_l}\otimes\mathbbm{1}_l \otimes\mathbbm{1}_l ) \nonumber \\
A_{2}&=\bigoplus_l (\mathbbm{1}_l\otimes A_{2_l} \otimes\mathbbm{1}_l ), \quad B_{2}=\bigoplus_l (\mathbbm{1}_l\otimes B_{2_l} \otimes\mathbbm{1}_l ) \nonumber \\
A_{3}&=\bigoplus_l (\mathbbm{1}_l\otimes\mathbbm{1}_l\otimes A_{3_l}  ), \quad B_{3}=\bigoplus_l (\mathbbm{1}_l \otimes\mathbbm{1}_l \otimes B_{3_l}  ), \nonumber
\end{align}
where by using 'local' unitary operations we can always choose ${A}_{i_l}$ and $B_{i_l}$ acting on  $\cH_l$  to be other following forms:
\begin{align}\label{eq: jordan angles}
    A_{1_l}&= X_1,\quad B_{1_l}= \cos \theta_l Y_1+\sin \theta_l X_1,\notag\\
    A_{2_l}&= X_2,\quad  B_{2_l}= \cos \phi_l Y_2+\sin \phi_l X_2,\notag\\
    A_{3_l}&= X_3,\quad B_{3_l}= \cos \nu_l Y_3+\sin \nu_l X_3,
\end{align}
with $\theta_l,\phi_l, \nu_l \in[-\frac{\pi}{2},\frac{\pi}{2}]$.

We now proceed to calculate the state fidelity given by $|\langle \hat \psi | \psi \rangle |^2$, where  $\ket{\hat \psi }$ is the ideal state given by Eq. (\ref{IdealStPr}).
Using the fact that the global phase on each subspace can be chosen freely, we can always set $ \langle \hat \psi_l | \psi_l \rangle \ge 0$, and therefore,
\begin{equation}
\langle \hat \psi | \psi \rangle  = \sum_l p_l \langle \hat \psi_l | \psi_l \rangle \ge \sum_l p_l |\langle \hat \psi_l | \psi_l \rangle |^2.
\end{equation}
Now, using the expressions of  $\ket{ \psi_l}$ and $  \ket{ \hat \psi_l} $     given by Eqs. (\ref{nonidealstate}) and (\ref{ProjIdealstate}), respectively, 
$\sum_l p_l |\langle \hat \psi_l | \psi_l \rangle |^2$ can be written as 
\begin{equation}
   \sum_l p_l |\langle \hat \psi_l \ket{ \psi_l}|^2
    = \sum_l p_l \frac{1}{2}|c^{(l)}_{000}-c^{(l)}_{111}|^2. \notag
\end{equation}
The expression in the right hand side of the above equation can be written as 
\begin{align}
   \frac{1}{2}|c^{(l)}_{000}-c^{(l)}_{111}|^2
    &=|c^{(l)}_{000}|^2+|c^{(l)}_{111}|^2-\frac{1}{2}|c^{(l)}_{000}+c^{(l)}_{111}|^2. \notag
    \end{align}
Using $\sum_{abc} |c^{(l)}_{abc}|^2=1$ in the first term in the right hand side of the above equation, we arrive 
at 
\begin{align}\label{stateFidel}
    \sum_l p_l |\langle \hat \psi_l \ket{ \psi}|^2 &= 1-  \sum_l p_l \sum_{abc \neq 000,111 } |  c^{(l)}_{abc}|^2 \notag \\
                                                   &-\sum_l p_l \frac{1}{2}|c^{(l)}_{000}+c^{(l)}_{111}|^2.
\end{align}
We will use the following lemma to obtain a lower bound on the right hand side of the above equation.
\begin{lemma}\label{prop: Anticommutativity}
	Suppose that the inequalities (\ref{eq: Error statistics n=3}) are satisfied for some $\epsilon>0$. Then, $\|\{ A_i,  B_i\}\ket{\psi}\|\le 4 \sqrt{2 \epsilon}$
	for all $i$.
\end{lemma}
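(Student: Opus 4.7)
The plan is to turn each of the four near-saturation inequalities into a norm estimate on a vector and then propagate them along exactly the same algebraic chains used in the exact self-testing proof leading to Eq. (\ref{symm1}).

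First I would observe that each of the operators appearing inside the four expectations, namely $A_1B_2B_3$, $B_1A_2B_3$, $B_1B_2A_3$ and $A_1A_2A_3$, is a product of mutually commuting Hermitian involutions and is therefore itself a unitary Hermitian involution $M$ with $M^2=\mathbbm{1}$. For any such $M$ and any unit vector $|\psi\rangle$ one has the elementary identity $\|(M-\mathbbm{1})|\psi\rangle\|^{2}=2\bigl(1-\langle\psi|M|\psi\rangle\bigr)$. Together with the bounds in (\ref{eq: Error statistics n=3}) this immediately gives the four vector estimates
\begin{align*}
\|(A_1B_2B_3-\mathbbm{1})|\psi\rangle\|&\le\sqrt{2\epsilon},\\
\|(B_1A_2B_3-\mathbbm{1})|\psi\rangle\|&\le\sqrt{2\epsilon},\\
\|(B_1B_2A_3-\mathbbm{1})|\psi\rangle\|&\le\sqrt{2\epsilon},\\
\|(A_1A_2A_3+\mathbbm{1})|\psi\rangle\|&\le\sqrt{2\epsilon}.
\end{align*}
Multiplying on the left by the appropriate involutions and using $B_i^{2}=A_i^{2}=\mathbbm{1}$ together with the commutation relations (\ref{dupa}), I would then derive the more useful consequences such as $\|(B_1-A_2B_3)|\psi\rangle\|\le\sqrt{2\epsilon}$, $\|(B_1-A_3B_2)|\psi\rangle\|\le\sqrt{2\epsilon}$, $\|(A_1B_2-B_3)|\psi\rangle\|\le\sqrt{2\epsilon}$ and $\|(A_1A_2+A_3)|\psi\rangle\|\le\sqrt{2\epsilon}$, all obtained the same way.

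Next I would mimic, with errors tracked by the triangle inequality, the two derivations of $A_1B_1|\psi\rangle$ that yield Eq. (\ref{preSym}). Along the first chain, write $A_1B_1|\psi\rangle = A_1(B_1-A_2B_3)|\psi\rangle + B_3(A_1A_2+A_3)|\psi\rangle - B_3A_3|\psi\rangle$, where the commutation of $B_3$ with $A_1,A_2$ has been used; the two error terms are unitary images of vectors of norm at most $\sqrt{2\epsilon}$, so $\|(A_1B_1+B_3A_3)|\psi\rangle\|\le 2\sqrt{2\epsilon}$. Along the second chain, similarly write $A_1B_1|\psi\rangle = A_1(B_1-A_3B_2)|\psi\rangle + A_3(A_1B_2-B_3)|\psi\rangle + A_3B_3|\psi\rangle$, which gives $\|(A_1B_1-A_3B_3)|\psi\rangle\|\le 2\sqrt{2\epsilon}$. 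Adding the two inequalities via the triangle inequality yields $\|\{A_3,B_3\}|\psi\rangle\|\le 4\sqrt{2\epsilon}$.

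Finally, the permutation symmetry of the inequality $\mathcal{I}_3$ under simultaneous exchanges $A_i\leftrightarrow A_j$, $B_i\leftrightarrow B_j$ means the whole argument above is symmetric in the three labels, so it yields $\|\{A_i,B_i\}|\psi\rangle\|\le 4\sqrt{2\epsilon}$ for every $i\in\{1,2,3\}$, which is the claim. The only subtlety I anticipate is careful bookkeeping of which unitaries are applied on the left when converting expectation-value deficits into vector norms and when collecting the two telescoping errors; because every factor we multiply by is a unitary involution, no error is amplified and all estimates remain tight up to the single factor of two coming from the sum of the two chains.
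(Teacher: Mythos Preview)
Your proof is correct and takes essentially the same approach as the paper: convert the four expectation deficits into norm bounds $\le\sqrt{2\epsilon}$ via $\|(M\mp\mathbbm{1})\ket{\psi}\|^2=2(1\mp\langle\psi|M|\psi\rangle)$, then chain them with the triangle inequality using that every left-multiplier is a unitary involution. The only cosmetic difference is that the paper bounds $\{A_1,B_1\}$ directly with a single four-term telescope $B_1A_1\to -B_1A_2A_3\to A_2B_2\to A_1A_2B_3\to A_1B_1$, whereas you follow the two two-step chains of Eq.~(\ref{preSym}) to get $\|(A_1B_1+B_3A_3)\ket{\psi}\|\le 2\sqrt{2\epsilon}$ and $\|(A_1B_1-A_3B_3)\ket{\psi}\|\le 2\sqrt{2\epsilon}$ and then subtract; the resulting constant $4\sqrt{2\epsilon}$ is the same.
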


\begin{proof}
We show that $\|\{A_{1},B_{1}\}\ket{\psi}\|\le4\sqrt{2\epsilon}$.
From \eqref{eq: Error statistics n=3} and assuming that $0 \le \epsilon \le 1$, we have that
\begin{align}
    \| A_{1}\ket{\psi} + A_{2}A_3 {\ket{\psi}}\| &= \sqrt{2(1+\bra{\psi}A_1A_2A_3\ket{\psi})} \notag \\
    &\le \sqrt{2\left[1-(1-\epsilon)\right]}\notag\\
    &=\sqrt{2\epsilon}, \label{Difference: Column1}
\end{align}
and, similarly, we have 
\begin{align}
 \| B_{2}\ket{\psi} - B_{1}A_3 {\ket{\psi}}\| &\le \sqrt{2\epsilon}, \label{Difference: Column2} \\
 \| B_{1}\ket{\psi} - A_2B_3 {\ket{\psi}}\| &\le \sqrt{2\epsilon}, \label{Difference: Row1} \\
 \| B_{2}\ket{\psi} - A_1B_3 {\ket{\psi}}\| &\le \sqrt{2\epsilon}. \label{Difference: Row2}
\end{align}
Using then the triangle inequality for the vector norm 
and the fact that it is unitarily invariant, we 
have 
%
\begin{eqnarray}
\|(A_{1}B_{1}+B_{1}A_{1})\ket{\psi}\| &\le& \|(B_{1}A_{1}+B_{1}A_2A_{3})\ket{\psi}\| \nonumber\\
&&+ \|(-B_{1}A_2A_{3}+A_{2}B_2)\ket{\psi}\|\nonumber\\
&&+ \|(-A_{2}B_2+A_1A_2B_3)\ket{\psi}\| \nonumber\\
&&+ \|(-A_1A_2B_3+A_{1}B_{1})\ket{\psi}\| \nonumber\\
&\le& 4\sqrt{2\epsilon}.
\end{eqnarray}
Due to the symmetry of the inequality, the same will hold for any other $i$, which completes the proof.
\end{proof}
First, let us bound $\sum_l p_l |c^{(l)}_{000}+c^{(l)}_{111}|^2$ in Eq. (\ref{stateFidel}). 
From Eqs. \eqref{Difference: Column1} and \eqref{Difference: Column2},
respectively, we obtain
\begin{align} 
    &\sum_l p_l (|c^{(l)}_{000}+c^{(l)}_{111}|^2 + |c^{(l)}_{001}+c^{(l)}_{110}|^2 \nonumber \\
    &+|c^{(l)}_{010}+c^{(l)}_{101}|^2 + |c^{(l)}_{011}+c^{(l)}_{100}|^2) \le \epsilon, \nonumber
\end{align}
\begin{align} 
    &\sum_l p_l (|e^{-i\phi_l}c^{(l)}_{000}-e^{i\theta_l}c^{(l)}_{111}|^2 + |e^{i\phi_l}c^{(l)}_{010}+e^{i\theta_l}c^{(l)}_{101}|^2 \nonumber \\
    &+|e^{-i\phi_l}c^{(l)}_{001}-e^{i\theta_l}c^{(l)}_{110}|^2 + |e^{i\phi_l}c^{(l)}_{011}+e^{i\theta_l}c^{(l)}_{100}|^2) \le \epsilon. \nonumber
 \end{align}
From the first of these equations, it follows that
    \begin{align}
    \sum_l p_l |c^{(l)}_{000}+c^{(l)}_{111}|^2 &\le\epsilon.\label{eq: state fid proof 2}
\end{align}
It also follows that
\begin{align}
    \sum_l p_l |c^{(l)}_{001}+ c^{(l)}_{110}|^2 &\le \epsilon\label{eq: app coeff bound 1},\\
    \sum_l p_l |e^{-i\phi_l}c^{(l)}_{001}-e^{i\theta_l}c^{(l)}_{110}|^2&\le \epsilon.\label{eq: app coeff bound 2}
\end{align}
Next, we proceed to bound $\sum_l p_l \sum_{abc \neq 000,111 } |  c^{(l)}_{abc}|^2$ in Eq. (\ref{stateFidel}).
We have
\begin{align}
    &|c^{(l)}_{110}|^2|e^{i\theta_l}+e^{-i\phi_l}|^2\notag\\
    &= |(e^{i\theta_l}c^{(l)}_{110}-e^{-i\phi_l}c^{(l)}_{001})+e^{-i\phi_l}(c^{(l)}_{001}+c^{(l)}_{110})|^2.\notag
    \end{align}
    Using the fact that
$|x+y|^2\le2(|x|^2+|y|^2)$ for any $x,y\in\mathbb{C}$ in the above equation, we arrive at
    \begin{align}
     &|c^l_{110}|^2|e^{i\theta_l}+e^{-i\phi_l}|^2\notag\\
    &\le 2\big(|e^{i\theta_l}c^l_{110}-e^{-i\phi_l}c^{(l)}_{001}|^2 + |c^{(l)}_{001}+c^{(l)}_{110}|^2\big). \notag
\end{align}
From Eqs.~\eqref{eq: app coeff bound 1} and  \eqref{eq: app coeff bound 2},
\begin{align}\label{eq: ap eq1}
    &\sum_l p_l |c^{(l)}_{110}|^2|e^{i\theta_l}+e^{-i\phi_l}|^2\notag\\
    &= \sum_l p_l |c^{(l)}_{110}|^2(2+2 \cos(\theta_l+\phi_l))\le 4\epsilon.
\end{align}
Similarly,
\begin{align}
    &|c^{(l)}_{001}|^2|e^{i\theta_l}+e^{-i\phi_l}|^2\notag\\
    &= |(-e^{i\theta_l}c^{(l)}_{110}+e^{-i\phi_l}c^{(l)}_{001})+e^{i\theta_l}(c^{(l)}_{001}+c^{(l)}_{110})|^2\notag\\
    &\le 2\big(|-e^{i\theta_l}c^{(l)}_{110}+e^{-i\phi_l}c^{(l)}_{001}|^2 + |c^{(l)}_{001}+c^{(l)}_{110}|^2\big), \notag
\end{align}
from which it follows that
\begin{align}\label{eq: ap eq2}
    &\sum_l p_l |c^{(l)}_{001}|^2|e^{i\theta_l}+e^{-i\phi_l}|^2\notag\\
    &= \sum_l p_l |c^{(l)}_{001}|^2(2+2 \cos(\theta_l+\phi_l))\le 4\epsilon.
\end{align}
Adding Eqs.~\eqref{eq: ap eq1} and \eqref{eq: ap eq2},
\begin{equation}\label{eq: cos(th + ph)}
    \sum_l p_l(|c^{(l)}_{001}|^2+|c^{(l)}_{110}|^2)(1+\cos(\theta_l+\phi_l))\le 4\epsilon.
\end{equation}
We now need the following lemma, which is similar to Lemma~\ref{prop: Anticommutativity}.
\begin{lemma}\label{lemma: deg 4 norm bound}
Suppose the ideal expectations are satisfied to within error $\epsilon$. Then 	    $\|A_{1}B_{2}\ket{\psi}+B_{1}A_{2}\ket{\psi}\|\le 2 \sqrt{2 \epsilon}$.
\end{lemma}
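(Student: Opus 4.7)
My approach would be to convert each of the four error statistics in Eq. \eqref{eq: Error statistics n=3} into a vector-norm estimate and then combine them via the triangle inequality, in the same spirit as the proof of Lemma \ref{prop: Anticommutativity}. From $\bra{\psi}A_1 B_2 B_3\ket{\psi}\ge 1-\epsilon$ and the fact that $A_1,B_2,B_3$ are mutually commuting involutions (so their product is itself a Hermitian involution), I would obtain $\|A_1 B_2 B_3\ket{\psi}-\ket{\psi}\|^2 = 2-2\bra{\psi}A_1 B_2 B_3\ket{\psi}\le 2\epsilon$. Multiplying on the left by the unitary $B_3$, which commutes with $A_1$ and $B_2$, then yields $\|A_1 B_2\ket{\psi}-B_3\ket{\psi}\|\le\sqrt{2\epsilon}$, and the identical manipulation of $\bra{\psi}B_1 A_2 B_3\ket{\psi}\ge 1-\epsilon$ gives $\|B_1 A_2\ket{\psi}-B_3\ket{\psi}\|\le\sqrt{2\epsilon}$.

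To bound the target \emph{sum} $\|A_1 B_2\ket{\psi}+B_1 A_2\ket{\psi}\|$, one would need a triangle-inequality decomposition of the form $\|A_1 B_2\ket{\psi}+B_1 A_2\ket{\psi}\|\le\|A_1 B_2\ket{\psi}-X\|+\|X+B_1 A_2\ket{\psi}\|$ for some auxiliary $X$ making both pieces $O(\sqrt\epsilon)$. The two single-vector estimates above force $X\approx B_3\ket{\psi}$ for the first piece and $X\approx -B_3\ket{\psi}$ for the second, which are simultaneously inconsistent. The only way to repair this is to feed in the remaining two error statistics (the stabilizer $\bra{\psi}B_1 B_2 A_3\ket{\psi}\ge 1-\epsilon$ and the sign-flipping $-\bra{\psi}A_1 A_2 A_3\ket{\psi}\ge 1-\epsilon$) through the anticommutation relations of Lemma \ref{prop: Anticommutativity}, hoping that a four-term chain $A_1 B_2\ket{\psi}\to\text{(sign flip)}\to -B_1 A_2\ket{\psi}$ closes with constant $2\sqrt{2}$.

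The main obstacle is that such a sign flip cannot actually be produced by the available relations: the first two stabilizers both point to $+B_3\ket{\psi}$ with the same sign, so in the ideal limit $\epsilon\to 0$ one has $A_1 B_2\ket{\psi}=B_3\ket{\psi}=B_1 A_2\ket{\psi}$ and the sum has norm exactly $2$, ruling out any bound that tends to $0$ with $\epsilon$. I therefore expect the lemma to carry a sign typo, with the intended statement being $\|A_1 B_2\ket{\psi}-B_1 A_2\ket{\psi}\|\le 2\sqrt{2\epsilon}$; under that reading, the two single-vector estimates above combine via one application of the triangle inequality, $\|A_1 B_2\ket{\psi}-B_1 A_2\ket{\psi}\|\le\|A_1 B_2\ket{\psi}-B_3\ket{\psi}\|+\|B_3\ket{\psi}-B_1 A_2\ket{\psi}\|\le 2\sqrt{2\epsilon}$, and the constant, the factor of $2$, and the subsequent use of this bound to derive the $(1-\cos(\theta_l+\phi_l))$ complement of Eq. \eqref{eq: cos(th + ph)} all line up naturally.
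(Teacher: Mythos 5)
Your diagnosis is correct: the lemma as printed is false. The relations $A_1B_2B_3\ket{\psi}=\ket{\psi}$ and $B_1A_2B_3\ket{\psi}=\ket{\psi}$, combined with the commutation relations, give $A_1B_2\ket{\psi}=B_3\ket{\psi}=B_1A_2\ket{\psi}$ in the ideal limit, so $\|A_1B_2\ket{\psi}+B_1A_2\ket{\psi}\|\to 2$ and no bound vanishing with $\epsilon$ is possible. The paper's own proof is a two-step triangle inequality through the auxiliary vector $A_1B_1A_3\ket{\psi}$: the first piece, $\|(A_{1}B_{2}-A_1B_1A_{3})\ket{\psi}\|=\|(B_{2}-B_1A_{3})\ket{\psi}\|\le\sqrt{2\epsilon}$ by Eq.~\eqref{Difference: Column2}, is fine, but the second piece $\|(A_1B_1A_{3}+B_{1}A_2)\ket{\psi}\|$ equals $\|2B_3\ket{\psi}\|=2$ in the ideal limit (there $A_1B_1A_3\ket{\psi}=A_1B_1B_1B_2\ket{\psi}=A_1B_2\ket{\psi}=B_3\ket{\psi}$), so it cannot be bounded by $\sqrt{2\epsilon}$; Eq.~\eqref{Difference: Column1} controls $\|(B_1A_1+B_1A_2A_3)\ket{\psi}\|$, which differs from this quantity by the order of $A_1$ and $B_1$. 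This is precisely the obstruction you anticipated. Your corrected statement $\|A_1B_2\ket{\psi}-B_1A_2\ket{\psi}\|\le 2\sqrt{2\epsilon}$ is true, and your proof---triangle inequality through $B_3\ket{\psi}$ using the first two inequalities of \eqref{eq: Error statistics n=3}---is valid and uses the same technique the paper attempts, just routed through the correct intermediate vector and the correct pair of correlators.

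One caution about your closing claim that everything downstream then ``lines up.'' With the Jordan forms \eqref{eq: jordan angles}, expanding $\|(A_1B_2-B_1A_2)\ket{\psi}\|^2$ attaches the factor $2-2\cos(\theta_l-\phi_l)$ to $|c^{(l)}_{001}|^2+|c^{(l)}_{110}|^2$ (and to $|c^{(l)}_{000}|^2+|c^{(l)}_{111}|^2$), whereas Eq.~\eqref{eq: cos(th - ph)} needs the factor $1+\cos(\theta_l-\phi_l)$. Since $(1-\cos(\theta_l-\phi_l))+(1+\cos(\theta_l+\phi_l))=2-2\sin\theta_l\sin\phi_l$ is not bounded away from zero for $\theta_l,\phi_l\in[-\pi/2,\pi/2]$, the sign fix does not by itself deliver Eq.~\eqref{sumsingle1}; the displayed expansion following the lemma (which, as printed, is the expansion of the plus combination and evaluates to roughly $4$ on the ideal state, not $\le 8\epsilon$) needs an independent repair, e.g.\ by using Lemma~\ref{prop: Anticommutativity} to suppress the weight of blocks with $\theta_l,\phi_l$ near $\pm\pi/2$. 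That, however, is a defect of the surrounding argument rather than of your proof of the corrected lemma.
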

\begin{proof}
\begin{align}
&\|A_{1}B_{2}\ket{\psi}+B_1A_2\ket{\psi}\|\le \|(A_{1}B_{2}-A_1B_1A_{3})\ket{\psi}\| \nonumber \\
&+ \|(A_1B_1A_{3}+B_{1}A_2)\ket{\psi}\|  \le 2\sqrt{2\epsilon}.\notag
\end{align}
where the last inequality follows from Eqs.~\eqref{Difference: Column1} and \eqref{Difference: Column2}.
\end{proof}

From the result of the lemma, we obtain
\begin{multline} \notag
    \sum_l p_l\big( (|c^{(l)}_{001}|^2+|c^{(l)}_{110}|^2)|e^{i\theta_l}+e^{i\phi_l}|^2\\
    + (|c^{(l)}_{011}|^2+|c^{(l)}_{100}|^2)|e^{i\theta_l}-e^{-i\phi_l}|^2 \big) \\
    +   \sum_l p_l\big( (|c^{(l)}_{000}|^2+|c^{(l)}_{111}|^2)|e^{-i\theta_l}+e^{-i\phi_l}|^2\\
    + (|c^{(l)}_{010}|^2+|c^{(l)}_{101}|^2)|e^{-i\theta_l}-e^{i\phi_l}|^2 \big) 
    \le 8 \epsilon,
\end{multline}
and therefore,
\begin{equation} \notag
    \sum_l p_l (|c^{(l)}_{001}|^2+|c^{(l)}_{110}|^2)|e^{i\theta_l}+e^{i\phi_l}|^2 \le 8\epsilon,
\end{equation}
or,
\begin{equation}\label{eq: cos(th - ph)}
    \sum_l p_l(|c^{(l)}_{001}|^2+|c^{(l)}_{110}|^2)(1+\cos(\theta_l-\phi_l))\le 4\epsilon.
\end{equation}
Adding Eqs.~\eqref{eq: cos(th + ph)} and \eqref{eq: cos(th - ph)},
\begin{align}
    \frac{8}{2}\epsilon &\ge \sum_l p_l(|c^{(l)}_{001}|^2+|c^{(l)}_{110}|^2)(1+\cos{\theta_l}\cos{\phi_l})\notag\\
    &\ge \sum_l p_l(|c^{(l)}_{001}|^2+|c^{(l)}_{110}|^2), \label{sumsingle1}
\end{align}
where in the last inequality we used $\theta_l$,$\phi_l\in[-\frac{\pi}{2}, \frac{\pi}{2}]$,
and so $\cos{\theta_l}\ge 0$ and $\cos{\phi_l}\ge 0$.
Similarly, from Eqs.  \eqref{Difference: Row1} and \eqref{Difference: Row2}, we have found the following bounds on $\sum_l p_l(|c^{(l)}_{010}|^2+|c^{(l)}_{101}|^2)$ and $\sum_l p_l(|c^{(l)}_{011}|^2+|c^{(l)}_{100}|^2)$:
\begin{align}
    \frac{8}{2}\epsilon &\ge \sum_l p_l(|c^{(l)}_{010}|^2+|c^{(l)}_{101}|^2), \label{sumsingle2}\\
    \frac{8}{2}\epsilon &\ge \sum_l p_l(|c^{(l)}_{011}|^2+|c^{(l)}_{100}|^2), \label{sumsingle3}
\end{align}
respectively.
Summing Eqs. (\ref{sumsingle1})-(\ref{sumsingle3}), we obtain 
\begin{align} \label{eq: state fid proof 1}
   \sum_l    p_l  \sum_{abc \neq 000,111 } |c^{(l)}_{abc}|^2&\le\frac{24}{2}\epsilon.
    \end{align}
Substituting Eqs. (\ref{eq: state fid proof 1}) and (\ref{eq: state fid proof 2}) in Eq. (\ref{stateFidel}), we obtain the state fidelity as 
\begin{align}\label{Fidelity: State}
     F(|\hat \psi \rangle,\ket{\psi})&\ge\left(1-\frac{25}{2}\epsilon\right)^2\notag\\
     &\ge 1 - 25 \epsilon.
\end{align}


Next we bound the fidelity of the operators.
From Eq.~\eqref{eq: jordan angles} and the definition of the ideal
operators (up to an unitary freedom), it follows that for all $i$, $\tr(\hat{A}_i A_i)=8$ which implies that  $F(\hat{A}_i, A_i)=1$.
From Eq.~\eqref{eq: jordan angles}, it also follows that
\begin{equation} \label{fidel_B}
    F(\hat{B}_{1}, B_{1}) = \sum_l p_l\cos{\theta_l}.
\end{equation}
Let us now obtain  a lower bound on  $\sum_l p_l\cos{\theta_l}$.
Using both the result of Lemma~\ref{prop: Anticommutativity} and Eq.~\eqref{eq: jordan angles}, we obtain
\begin{align}\label{eq: Col 2 fid proof}
    4\epsilon &\ge \frac{1}{8}\|\{A_{1}, B_{1}\}\ket{\psi}\|^2 \notag\\
    &= \sum_l p_l\sin^2{\theta_l}.
\end{align}    
Using $\cos^2{\theta_l} + \sin^2{\theta_l}=1$ and $\sum_l p_l =1$, we can write     $\sum_l p_l\sin^2{\theta_l}$ as  
\begin{align}
  \sum_l p_l\sin^2{\theta_l}  &= 1 - \sum_l p_l\cos^2{\theta_l}\notag\\
    &\ge 1 - \sum_l p_l\cos{\theta_l},
\end{align}
where the inequality follows from
$\theta_l\in[-\frac{\pi}{2}, \frac{\pi}{2}]$ and therefore $\cos{\theta_l}\ge 0$.
From the above two equations, it follows  that 
\begin{equation}
 1-    \sum_l p_l\cos{\theta_l} \le 4 \epsilon.
\end{equation}
Using the above equation in Eq. (\ref{fidel_B}), we obtain
\begin{equation*}
    F(\hat{B}_{1}, B_{1})\ge1-4\epsilon.
\end{equation*}
This ends the proof of Theorem \ref{thm robust}.


Theorem \ref{thm robust} can be straightforwardly extended to any $n$ as follows.  Here the state fidelity is defined as earlier and the operator fidelity
is defined with the different normalization factor as $F(\hat X_i,X_i):=(1/\dim(\hat{V}_n))\mathrm{Tr}(\hat X_i X_i)$, where $\dim(\hat{V}_n)$ is the dimension of the invariant subspace.
%
\begin{thm}\label{thm robustn}
If a quantum state $|\psi \ra$ and a set of measurements $A_i,B_j$ with $i,j=\{ 1,2,\ldots,n \}$ in a Hilbert space $\cH_n$ satisfy the ideal expectations corresponding to the maximal quantum violation of the inequality \eqref{Ineqn} to within error     $\epsilon$, then there exists a projection $P:\cH_n \rightarrow \hat V_n$, where $\dim(\hat{V}_n)=2^n$, a state $\ket{\hat \psi}\in \hat{V}_n$ and $\hat A_i  $, $\hat{B}_j $ which are Hermitian involutions acting on $\hat{V}_n$ for all $i$ and $j$  such that 
\beq 
\la \hat \psi |  \hat A_1 \hat B_2 \hat B_3 \hat B_4... \hat B_n |\hat\psi\ra &=&  1 \nonumber \\ 
\la \hat \psi | \hat B_1 \hat A_2 \hat B_3 \hat B_4... \hat B_n |\hat\psi\ra &=&  1 \nonumber \\ 
... \nonumber \\
\la \hat  \psi | \hat B_1 \hat B_2 \hat B_3 \hat B_4... \hat A_n |\hat\psi\ra &=&  1 \nonumber
\eeq
and there also exists  an unitary $U$ acting on $ \hat V $ such that
\begin{align*}
     F(U |\hat \psi\rangle,\ket{\psi})&\ge 1- \epsilon_0,\\
    F( U \hat A_{i} U^{\dagger}, A_{i})&\ge 1- \epsilon_1 \quad \forall i, \\
    F( U \hat B_{i} U^{\dagger}, B_{i})&\ge 1- \epsilon_2 \quad \forall i , 
\end{align*}
where $\epsilon_0=[8(2^{n-1}-1)+1] \epsilon$, $\epsilon_1=0$, $\epsilon_2=2^{5-n} \epsilon$.
\end{thm}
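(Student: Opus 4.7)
The plan is to extend the $n=3$ argument of Theorem~\ref{thm robust} block by block. First I would invoke the natural $n$-pair generalization of the Jordan decomposition used in that proof to write $\mathcal{H}_n = \bigoplus_l \mathcal{H}_l$ with each block $\mathcal{H}_l$ of dimension at most $2^n$ and invariant under every $A_i$ and $B_j$. Inside each block I can fix a local basis so that $A_{i,l}$ acts as $X_i$ on the $i$-th qubit factor, while $B_{i,l} = \cos\theta_{i,l}\, Y_i + \sin\theta_{i,l}\, X_i$ for some angle $\theta_{i,l} \in [-\pi/2,\pi/2]$; correspondingly $\ket{\psi} = \sum_l \sqrt{p_l}\, \ket{\psi_l}$ with $\ket{\psi_l} = \sum_{\vec a \in \{0,1\}^n} c_{\vec a}^{(l)}\, \ket{\vec a}_l$. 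The ideal subspace $\hat V_n$ is then the span of the $2^n$ vectors $\ket{\widetilde{\vec a}} := \sum_l \sqrt{p_l}\, \ket{\vec a}_l$, with ideal observables $\hat A_i := \bigoplus_l X_i$, $\hat B_j := \bigoplus_l Y_j$ and ideal state $\ket{\hat\psi}$ defined as the natural generalization of \eqref{IdealStPr}, namely $(1/\sqrt{2})(\ket{\widetilde{0\cdots 0}} - \ket{\widetilde{1\cdots 1}})$; by direct computation these saturate the expectations listed in the statement.

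Following the $n=3$ template, after fixing relative phases so that $\langle\hat\psi_l|\psi_l\rangle \geq 0$ in each block, one arrives at
\begin{equation}
\langle\hat\psi|\psi\rangle \geq 1 - \sum_l p_l \sum_{\vec a \neq 0^n,\, 1^n} |c_{\vec a}^{(l)}|^2 - \tfrac{1}{2}\sum_l p_l\, |c_{0^n}^{(l)} + c_{1^n}^{(l)}|^2.
\end{equation}
The last term is bounded by $\epsilon$ immediately from the first ``$+$'' correlator of \eqref{Ineqn} written in the $X$-basis. The core of the proof, and the main obstacle, is the middle sum. For each pair $(\vec a, \bar{\vec a})$ of bitflip-conjugate strings with $\vec a \notin \{0^n, 1^n\}$, the plan is to select two ``$+$'' correlators of \eqref{Ineqn} differing in exactly those qubit positions where $\vec a$ has $1$'s, and combine them so as to produce a bound $\sum_l p_l (|c_{\vec a}^{(l)}|^2 + |c_{\bar{\vec a}}^{(l)}|^2)(1 + \cos\theta_{i,l}\cos\theta_{j,l}) \leq 8\epsilon$, generalizing Eqs.~\eqref{eq: cos(th + ph)}--\eqref{eq: cos(th - ph)}. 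Nonnegativity of each cosine (guaranteed by the angle range $[-\pi/2,\pi/2]$) then gives $\leq 4\epsilon$ per pair, so that summing over the $2^{n-1}-1$ such pairs bounds the middle sum by $4(2^{n-1}-1)\epsilon$, and squaring via $(1-x)^2 \geq 1 - 2x$ yields the stated $F(\ket{\hat\psi},\ket{\psi}) \geq 1 - [8(2^{n-1}-1) + 1]\epsilon$. The hard step here is the combinatorial bookkeeping: checking that every non-trivial pair of bitflip-conjugate strings admits a valid choice of two correlators from \eqref{Ineqn} that isolates that pair's coefficients with the desired cosine structure, and that the assignment can be made globally consistent without double-counting.

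The operator fidelities are then essentially automatic. Since $A_{i,l} = X_i = \hat A_{i,l}$ in every block, one has $F(\hat A_i, A_i) = 1$ directly, giving $\epsilon_1 = 0$. For the $B_j$, a short telescoping argument along the correlators of \eqref{Ineqn}, generalizing Lemma~\ref{prop: Anticommutativity}, controls $\|\{A_j, B_j\}\ket{\psi}\|^2$ by a constant multiple of $\epsilon$ that is independent of $n$ (the telescope has a constant number of links, each bounded in terms of one of the ideal expectations in \eqref{eq: Error statistics n=3}). Expanding this norm in the Jordan basis yields a multiple of $\sum_l p_l \sin^2\theta_{j,l}$, and combining with $\sum_l p_l \cos\theta_{j,l} \geq 1 - \sum_l p_l \sin^2\theta_{j,l}$ together with the $1/\dim(\hat V_n) = 1/2^n$ normalization in the operator fidelity produces the claimed $F(\hat B_j, B_j) \geq 1 - 2^{5-n}\epsilon$; the $2^{-n}$ factor is precisely what accounts for the bound improving with $n$, since the operator fidelity is normalized per dimension.
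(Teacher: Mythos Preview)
Your outline tracks the paper's proof almost exactly: Jordan decomposition into blocks of dimension at most $2^n$, computational-basis expansion of each $\ket{\psi_l}$, the same lower bound on $\langle\hat\psi|\psi\rangle$ split into the two error terms you display, and the anticommutator argument for the operator fidelities. The tally $2^{n-1}-1$ pairs at $4\epsilon$ each, the squaring via $(1-x)^2\ge 1-2x$, and the $2^{-n}$ normalisation giving $\epsilon_2=2^{5-n}\epsilon$ all match.

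One structural difference: the paper does \emph{not} use your uniform gauge $A_{i,l}=X_i$ for all $i$. It fixes $A_{i,l}=X_i$ only for $i=1,2,3$ and swaps roles on the remaining factors, taking $B_{j,l}=X_j$ and $A_{j,l}=\cos\theta_{j,l}Y_j+\sin\theta_{j,l}X_j$ for $j\ge 4$ (with ideal observables $\hat A_{i_l}=-Y_i$, $\hat B_{j_l}=X_j$ there). Your uniform choice is cleaner and more directly yields $\epsilon_1=0$.

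The one place your sketch diverges substantively is the mechanism for bounding each bitflip-conjugate pair. Your plan to ``select two `$+$' correlators differing in exactly those positions where $\vec a$ has $1$'s'' cannot work as stated: any two ``$+$'' correlators $C_i,C_j$ of \eqref{Ineqn} differ in precisely two positions (the locations of the single $A$), independent of $\vec a$. What the paper actually does (and says it replicates from the $n=3$ template) is to combine one ``$-$'' correlator with one ``$+$'' correlator to produce the $(1+\cos(\theta_{i,l}+\theta_{j,l}))$-weighted bound analogous to \eqref{eq: cos(th + ph)}, and then invoke the $n$-party analogue of Lemma~\ref{lemma: deg 4 norm bound} (itself a two-link triangle-inequality chain through further correlators) for the companion $(1+\cos(\theta_{i,l}-\theta_{j,l}))$ bound analogous to \eqref{eq: cos(th - ph)}; adding and using $\cos\theta_{i,l},\cos\theta_{j,l}\ge 0$ gives the $4\epsilon$ per pair. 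So your final count is right, but the per-pair isolation requires a mixture of ``$+$'' and ``$-$'' correlators rather than two ``$+$'' ones.
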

\begin{proof}
The bounds of this theorem are obtained using the similar steps used in the proof of Theorem \ref{thm robust}.
With respect to the Jordan decomposition,
\begin{equation} \label{Jdecn}
\cH_n=\bigoplus_l\cH_l, 
\end{equation}
where each $\cH_l$ has dimension at most $2^n$ and is 
invariant under the action of $A_{i}$ and $B_{j}$.
As before, the state $\ket{\psi}$ can be decomposed as 
\begin{equation}
    \ket{\psi}= \sum_l \sqrt{p_l}  \ket{\psi_l},
\end{equation} 
where  $\ket{\psi_l}  \in \cH_l$ 
and $\sum_l p_l =1$. With respect to the computational 
basis $\{\ket{n_1n_2 \cdots n_n}_l: n_i \in \{0,1\} \}$, we express each 
$\ket{\psi_l}$ as
\begin{equation} \notag
     \ket{ \psi_l}=\sum_{n_i \in\{0,1\}} c^{(l)}_{n_1n_2\cdots n_n}\ket{n_1 n_2 \cdots n_n}_l, 
\end{equation}
where 
$\sum_{n_1n_2\cdots n_n}|c_{n_1n_2\cdots n_n}|^2=1$. 

We define the subspace $\hat V_n \subseteq \cH_n $ as the linear span of $\{\ket{\tilde n_1 \tilde n_2 \cdots \tilde n_n} : = \sum_l \sqrt{p_l} \ket{n_1n_2 \cdots n_n}_l \}$.
We define the ideal state in this subspace  as
\begin{equation} \label{nIdealstate}
 \ket{\hat \psi }: =  \frac{1}{\sqrt{2}}\big(\ket{\tilde 0_1 \tilde 0_2 \cdots  \tilde 0_n  }-\ket{\tilde 1_1 \tilde 1_2 \cdots \tilde 1_n}\big),
\end{equation}
which can be re-expressed as 
\begin{equation}
    \ket{\hat \psi } =  \sum_l \sqrt{p_l}  \ket{\hat \psi_l},
\end{equation}
where 
\begin{equation}
\ket{\hat \psi_l}:=\frac{1}{\sqrt{2}}(\ket{0_10_2\cdots 0_n}_l-\ket{1_11_2\cdots 1_n}_l).
\end{equation}
Note that for the ideal observables defined as 
\begin{align}
\hat A_i&:=\bigoplus_l\left(\bigotimes_{i=1}^{k-1}\mathbbm{1}_l\otimes \hat A_i^l\otimes\bigotimes_{i=k+1}^{n}\mathbbm{1}_l\right), \nonumber \\
\hat B_j&:=\bigoplus_l\left(\bigotimes_{i=1}^{k-1}\mathbbm{1}_l\otimes \hat B_j^l\otimes\bigotimes_{i=k+1}^{n}\mathbbm{1}_l\right), \nonumber 
\end{align}
where
\begin{align}
    \hat A_{i_l}&= X_i,\quad \hat  B_{j_l}= Y_j,\notag
    \end{align}
    with $i,j=1,2,3$, and, for $i,j=4,5,\cdots n$,
\begin{align}
   \hat  A_{i_l}&= -Y_i,\quad \hat  B_{j_l}= X_j, \notag
\end{align}
the ideal state defined in Eq. (\ref{nIdealstate}) violates the noncontextuality inequality (\ref{Ineq3}) maximally.


From Appendix \ref{Appendix: Jordan}, it follows that the nonideal observables can be written as
\begin{align}
A_i&=\bigoplus_l\left(\bigotimes_{i=1}^{k-1}\mathbbm{1}_l\otimes A_i^l\otimes\bigotimes_{i=k+1}^{n}\mathbbm{1}_l\right), \nonumber \\
B_j&=\bigoplus_l\left(\bigotimes_{j=1}^{k-1}\mathbbm{1}_l\otimes B_j^l\otimes\bigotimes_{j=k+1}^{n}\mathbbm{1}_l\right), \nonumber 
\end{align}
for all $i,j=1,2,\cdots n$.
We choose unitary such that the operators ${A}_{i_l}$ and $B_{j_l}$ acting on  $\cH_l$  can be written as follows:
\begin{align}\label{eq: jordan angles_n}
    A_{i_l}&= X_j, \quad   B_{j_l}=  \cos\theta_{j_l} Y_j+\sin \theta_{j_l} X_j, \quad i,j=1,2,3, \\
    B_{i_l}&= X_k, \quad   A_{j_l}=  \cos\theta_{j_l} Y_k+\sin \theta_{j_l} X_k, \quad i,j=4,5, \cdots n,
\end{align}
with $\theta_{j_l} \in[-\frac{\pi}{2},\frac{\pi}{2}]$ for all $j$ and $k$.

Using the similar steps used to obtain Eq. (\ref{stateFidel}), we have
\begin{align}\label{FidStateprior}
    \langle \hat \psi \ket{ \psi} &\ge 1- \sum_l p_l \sum_{n_1n_2\cdots n_n \neq 0_10_2\cdots 0_n,1_11_2\cdots 1_n } |c^{(l)}_{n_1n_2\cdots n_n}|^2 \nonumber \\
    &-\frac{1}{2}\sum_l p_l|c^{(l)}_{0_10_2\cdots0_n}+ c^{(l)}_{1_11_2\cdots1_n}|^2.
\end{align}
Similarly to the case of $n=3$, a bound on the right hand side of the above equation can be obtained as follows.
The term $\frac{1}{2}\sum_l p_l|c^{(l)}_{0_10_2\cdots0_n}+ c^{(l)}_{1_11_2\cdots1_n}|^2$ in  Eq. (\ref{FidStateprior}) can be bounded using the inequality given  by
\begin{align}
    \| A_{1}\ket{\psi} + A_{2}A_3B_4\cdots B_n{\ket{\psi}}\| \le \sqrt{2\epsilon}. \label{Difference: Column1n}
\end{align}
From this equation, we obtain
\begin{align} 
    &\sum_l p_l (|c^{(l)}_{0_10_20_3\cdots 0_n}+ c^{(l)}_{1_11_21_3\cdots 1_n}|^2 + \cdots   \\ \nonumber
    &+ |c^{(l)}_{0_11_21_3\cdots 1_n}+ c^{(l)}_{1_10_20_3\cdots 0_n}|^2) \le \epsilon, \nonumber
\end{align}
from which it follows that 
\begin{equation}\label{boundnfid1}
\sum_l p_l|c^{(l)}_{0_10_2\cdots0_n}+ c^{(l)}_{1_11_2\cdots1_n}|^2 \le \epsilon.
\end{equation}

Next, the second term     in Eq. (\ref{FidStateprior})  can be bounded using the other inequalities such as
\begin{align}
 \| B_{2}\ket{\psi} - B_{1}A_3B_4\cdots B_n {\ket{\psi}}\| &\le \sqrt{2\epsilon}, \label{Difference: Column2n}
\end{align}
from which we obtain,
\begin{align} 
    &\sum_l p_l \big(\sum_{n_4,\cdots, n_n}|e^{-i\theta_{2_l}}c^{(l)}_{000n_4 \cdots  n_n}- e^{i\theta_{1_l}}c^{(l)}_{111 \bar n_4 \cdots \bar n_n}|^2    \nonumber \\ 
    &+\sum_{n_4,\cdots, n_n}|e^{i \theta_{2_l}}c^{(l)}_{010n_4 \cdots n_n}+ e^{i\theta_{1_l}}c^{(l)}_{101 \bar n_4\cdots \bar n_n}|^2 \nonumber \\
    &+\sum_{n_4,\cdots, n_n}|e^{-i \theta_{2_l}}c^{(l)}_{001n_4 \cdots n_n}- e^{i \theta_{1_l}}c^{(l)}_{110\bar n_4 \cdots \bar n_n}|^2  \nonumber \\
    &+\sum_{n_4,\cdots, n_n}|e^{i \theta_{2_l}}c^{(l)}_{011n_4\cdots n_n}+e^{i \theta_{1_l}}c^{(l)}_{100\bar n_4\cdots \bar n_n}|^2\big)   \le \epsilon, \nonumber
 \end{align}
 where $\bar n_i$, with $i=4,5,\cdots n$, denotes $n_i \oplus_2 1$.
From the above equation, using the steps similar to the ones used to obtain the bound given by Eq. (\ref{sumsingle1}), we obtain a bound on $\sum_l p_l(|c^{(l)}_{0010 \cdots 0}|^2+|c^{(l)}_{1101 \cdots 1}|^2)$ as follows: 
\begin{align}\label{pairsumbound}
 \sum_l p_l(|c^{(l)}_{0010 \cdots 0}|^2+|c^{(l)}_{1101 \cdots 1}|^2) \le    \frac{8}{2}\epsilon.
\end{align}
The sum $\sum_{n_1n_2\cdots n_n \neq 00\cdots 0,11\cdots 1 } \sum_l p_l|c^{(l)}_{n_1n_2\cdots n_n}|^2$ can be spilt into the sum  of $(2^{n-1}-1) $ terms which are a sum of modulus of two coefficients $c^{(l)}_{n_1n_2\cdots n_n}$  as in the left hand side of Eq. (\ref{pairsumbound}). These  $(2^{n-1}-1) $ terms have the same bound as given in Eq. (\ref{pairsumbound}). Therefore, we obtain
\begin{align}\label{boundnfid2}
   \sum_l p_l   \sum_{n_1n_2\cdots n_n \neq 00\cdots 0,11\cdots 1 }  |c^{(l)}_{n_1n_2\cdots n_n}|^2&\le\frac{8(2^{n-1}-1)}{2}\epsilon.
    \end{align}
 Using Eqs. (\ref{boundnfid1}) and (\ref{boundnfid2}) in Eq. (\ref{FidStateprior}), we obtain  the bound on the fidelity as given in Theorem. \ref{thm robustn}.

Next, we bound the fidelity of the operators. As in the case of $n=3$, we also have $\|\{A_{1},B_{1}\}\ket{\psi}\|\le4\sqrt{2\epsilon}$ which implies that
\begin{align}
    2^{5-n} \epsilon &\ge \frac{1}{2^n}\|\{A_{1}, B_{1}\}\ket{\psi}\|^2 \notag\\
    &\ge 1 - \sum_l p_l\cos{\theta_l},
\end{align}
leading to the following bound on the fidelity between $\hat{B}_{1}$ and 
${B}_{1}$:
\begin{equation*}
    F(\hat{B}_{1}, B_{1})\ge1-2^{5-n}  \epsilon,
\end{equation*}
employing the similar steps as in the case of $n=3$. This ends the proof of Theorem \ref{thm robustn}.
\end{proof}

\section{Jordan's lemma} \label{Appendix: Jordan}

In this section we prove a corollary to Jordan's lemma which is
a direct generalization of Corollary 7.1 proven in Ref. \cite{IMO+20}.
For completeness we also state Jordan's lemma (see, e.g., Ref. \cite{IMO+20} for a proof).

\begin{lemma}[Jordan's lemma]\label{lemma: Jordan}
    Let $A$ and $B$ be a pair of Hermitian operators acting on a Hilbert space $\cH$ 
    such that $A^2=B^2=\mathbbm{1}$. Then, $\cH$ decomposes as a direct sum $\cH=\bigoplus_l\cH_l$, with $\dim \cH_l\in\{1,2\}$,
    and $A$ and $B$ act invariantly on each $\cH_l$.
\end{lemma}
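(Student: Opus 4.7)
My plan is to construct the decomposition in two stages: first use an operator that commutes with both $A$ and $B$ to obtain a coarse invariant splitting, then refine each piece so that it has dimension at most two.

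The key observation is that $M:=\{A,B\}=AB+BA$ is self-adjoint and commutes with both $A$ and $B$: using $A^2=B^2=\mathbbm{1}$, one checks $[M,A]=BA^2-A^2B=0$ and similarly $[M,B]=0$. By the spectral theorem (applied in the finite-dimensional case relevant to this paper), $\cH$ decomposes into the $M$-eigenspaces $\cH=\bigoplus_{\mu}\cE_\mu$, and each $\cE_\mu$ is invariant under $A$ and $B$. The restrictions of $A$ and $B$ to $\cE_\mu$ remain Hermitian involutions, and additionally satisfy $AB+BA=\mu\,\mathbbm{1}_{\cE_\mu}$.

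Next I will break each $\cE_\mu$ into one- or two-dimensional blocks. I split $\cE_\mu=V_+\oplus V_-$ according to the $\pm 1$ eigenspaces of $A|_{\cE_\mu}$, and write $B$ accordingly as a $2\times 2$ block matrix with entries $B_{\pm\pm}$. Imposing $AB+BA=\mu\,\mathbbm{1}$ forces the diagonal blocks to be $B_{++}=(\mu/2)\mathbbm{1}_{V_+}$ and $B_{--}=-(\mu/2)\mathbbm{1}_{V_-}$; Hermiticity gives $B_{-+}=B_{+-}^\dagger$; and $B^2=\mathbbm{1}$ then implies $B_{+-}B_{+-}^\dagger=(1-\mu^2/4)\,\mathbbm{1}_{V_+}$, so in particular $|\mu|\le 2$. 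In the extreme case $|\mu|=2$, the off-diagonal block vanishes and $A,B$ are simultaneously diagonal, yielding a decomposition of $\cE_\mu$ into one-dimensional joint invariant lines. In the generic case $|\mu|<2$, the block $B_{+-}$ is a nonzero scalar multiple of a unitary from $V_-$ onto $V_+$, so $\dim V_+=\dim V_-$; picking any orthonormal basis $\{e_i\}$ of $V_-$ and pairing each $e_i$ with $(1-\mu^2/4)^{-1/2}B_{+-}e_i\in V_+$ partitions $\cE_\mu$ into mutually orthogonal two-dimensional subspaces, each spanned by one such pair and jointly invariant under $A$ and $B$.

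Assembling these one- and two-dimensional pieces over all $\mu$ yields the required decomposition $\cH=\bigoplus_l \cH_l$. The main potential obstacle is purely technical: in an infinite-dimensional Hilbert space the spectrum of $M$ can be continuous, so $\bigoplus_\mu\cE_\mu$ should really be read as a direct integral, and the selection of pairs $(e_i,B_{+-}e_i)$ inside each spectral fiber would need a measurable-choice argument (or equivalently, a Zorn's lemma argument on the poset of families of mutually orthogonal joint invariant subspaces of dimension at most two). Since the lemma is only invoked in finite dimension in this paper, this subtlety does not affect the applications.
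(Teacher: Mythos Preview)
Your argument is correct. The paper does not actually supply its own proof of Jordan's lemma; it only states the result, cites Ref.~\cite{IMO+20}, and adds a one-line hint that the decomposition comes from the eigenvectors of the unitary $AB$. Your route is a close cousin of that hint but is organised around the \emph{Hermitian} operator $M=\{A,B\}$ instead of the unitary $AB$: since $AB+BA=2\,\mathrm{Re}(AB)$ in the sense that an $AB$-eigenvector with eigenvalue $e^{i\theta}$ is an $M$-eigenvector with eigenvalue $2\cos\theta$, the two decompositions coincide. The advantage of your choice is that the spectral theorem is applied to a self-adjoint operator, after which the block analysis of $B$ relative to the $A$-eigenspaces on each $\cE_\mu$ is clean and fully explicit; the $AB$-eigenvector approach reaches the same $2$-dimensional blocks by pairing an eigenvector $|v\rangle$ of $AB$ with $A|v\rangle$, which lies in the conjugate eigenspace. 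Your remark about the infinite-dimensional caveat (direct integral in place of direct sum) is accurate and, as you note, irrelevant for the finite-dimensional uses in this paper.
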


In this way, since the set of eigenvectors of $AB$ span $\mathcal{H}$, we can decompose the Hilbert space $\mathcal{H} = \bigoplus_l \mathcal{H}_l$ where the dimension of such $\mathcal{H}_l$ is at most $2$.

\begin{coro}\label{corrNonideal}
Let $A_i$ and $B_{j}$ with $i=1,2,3$ be Hermitian operators acting 
on a Hilbert space $\mathcal{H}$ that square to indetity and
satisfy the following commutation relations
\begin{equation}\label{commutApp}
    [A_i,A_j]=[A_i,B_j]=0 \qquad (i\neq j).
\end{equation}
Then, $\cH$ can be decomposed as 
\begin{equation}
\cH=\bigoplus_l (\cH_{l}^1\otimes\cH_{l}^2 \otimes\cH_{l}^3),
\end{equation}
where each local Hilbert space $\mathcal{H}_l^i$ is of dimension at most two.
Moreover,
$A_1=\bigoplus_l (A_{1_l}\otimes\mathbbm{1}_l \otimes\mathbbm{1}_l )$, 
$B_1=\bigoplus_l (B_{1_l}\otimes\mathbbm{1}_l \otimes\mathbbm{1}_l )$,
$A_{2}=\bigoplus_l (\mathbbm{1}_l\otimes A_{2_l} \otimes\mathbbm{1}_l )$,
$B_{2}=\bigoplus_l (\mathbbm{1}_l\otimes B_{2_l} \otimes\mathbbm{1}_l )$,
$A_{3}=\bigoplus_l (\mathbbm{1}_l\otimes\mathbbm{1}_l\otimes A_{3_l}  )$,
and
$B_{3}=\bigoplus_l (\mathbbm{1}_l \otimes\mathbbm{1}_l \otimes B_{3_l}  )$.
\end{coro}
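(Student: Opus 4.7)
The plan is to apply Jordan's lemma (Lemma~\ref{lemma: Jordan}) three times in succession, once to each commuting pair $(A_i,B_i)$, using the commutation relations \eqref{commutApp} to peel off one tensor factor at a time.

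First I would apply Jordan's lemma to $(A_1,B_1)$, splitting $\mathcal{H}$ into jointly invariant subspaces of dimension at most~$2$. Gathering blocks that carry unitarily equivalent actions of $(A_1,B_1)$---the isotypic decomposition of the algebra $\langle A_1,B_1\rangle$---one rewrites
\begin{equation}
\mathcal{H}=\bigoplus_{l_1}\mathcal{H}^1_{l_1}\otimes\mathcal{K}_{l_1},
\end{equation}
with $\dim\mathcal{H}^1_{l_1}\in\{1,2\}$, $A_1=\bigoplus_{l_1}A_1^{(l_1)}\otimes\mathbbm{1}$ and $B_1=\bigoplus_{l_1}B_1^{(l_1)}\otimes\mathbbm{1}$. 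Four of the sectors collect the one-dimensional Jordan blocks, grouped by the joint eigenvalue pair $(\pm1,\pm1)$ of $(A_1,B_1)$; the remaining sectors collect the two-dimensional irreducible blocks, grouped by the angle $\theta$ that parametrises an irreducible pair of involutions on $\mathbb{C}^2$.

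Since \eqref{commutApp} implies that each of $A_2,B_2,A_3,B_3$ commutes with both $A_1$ and $B_1$, the double commutant theorem (equivalently, Schur's lemma applied sector by sector) forces them to take the form $\mathbbm{1}_{\mathcal{H}^1_{l_1}}\otimes X_{l_1}$ on every summand, i.e.\ they are supported on $\mathcal{K}_{l_1}$ alone. Repeating the construction inside each $\mathcal{K}_{l_1}$ for the pair $(A_2,B_2)$ then yields $\mathcal{K}_{l_1}=\bigoplus_{l_2}\mathcal{H}^2_{l_1 l_2}\otimes\mathcal{K}_{l_1 l_2}$ with $\dim\mathcal{H}^2_{l_1 l_2}\le 2$, and the same commutant argument places $A_3,B_3$ on $\mathcal{K}_{l_1 l_2}$ alone. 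A final application of Jordan's lemma to $(A_3,B_3)|_{\mathcal{K}_{l_1 l_2}}$ splits the innermost factor into pieces of dimension at most~$2$; relabelling $l=(l_1,l_2,l_3)$ then produces the claimed tensor-product decomposition together with the asserted block forms of all $A_i$ and $B_j$.

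The main obstacle is the isotypic grouping underlying the statement ``operators commuting with $(A_1,B_1)$ act as $\mathbbm{1}$ on the first factor''. On a single one-dimensional Jordan block the commutant of $\langle A_1,B_1\rangle$ is the whole $\mathbb{C}$, so a naive per-block Schur argument tells us nothing; the fix is precisely the regrouping described above, in which one-dimensional blocks with identical eigenvalue pair are merged into a single sector with $\dim\mathcal{H}^1_{l_1}=1$ and two-dimensional blocks with identical $\theta$ are merged into a sector with $\dim\mathcal{H}^1_{l_1}=2$. Once this regrouping is performed, Schur's lemma applies uniformly on every sector and produces the desired tensor-factor form, after which the two further rounds of induction on $(A_2,B_2)$ and $(A_3,B_3)$ are conceptually identical.
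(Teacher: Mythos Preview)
Your argument is correct, but it follows a genuinely different route from the paper's.

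The paper does not iterate Jordan's lemma pairwise. Instead it first observes that the three unitaries $A_iB_i$ mutually commute (a consequence of \eqref{commutApp}), so they admit a joint eigenbasis. Starting from a joint eigenvector $\ket{\alpha,\beta,\gamma}$, the paper generates an (at most) eight-dimensional subspace by acting with all products of the $A_i$, and identifies this span with $\mathrm{span}\{\ket{\alpha},\ket{\bar\alpha}\}\otimes\mathrm{span}\{\ket{\beta},\ket{\bar\beta}\}\otimes\mathrm{span}\{\ket{\gamma},\ket{\bar\gamma}\}$; Jordan's lemma is then invoked only to check that each pair $(A_i,B_i)$ acts on the $i$th tensor factor and trivially on the others. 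The direct sum over $l$ is obtained by exhausting the joint eigenbasis.

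Your approach instead peels off one tensor factor at a time via the isotypic decomposition of the $C^*$-algebra $\langle A_1,B_1\rangle$, using Schur's lemma (or the double commutant theorem) to force the remaining observables onto the multiplicity space, and then recurses. This is slightly heavier in abstract machinery but has two advantages: the tensor structure is produced automatically rather than verified by hand, and the induction to arbitrary $n$ is immediate. The paper's argument is more elementary and self-contained---it needs no representation-theoretic input beyond simultaneous diagonalisability of commuting unitaries---but the isomorphism of the eight-vector span with the threefold tensor product, and the triviality of each $(A_i,B_i)$ on the other factors, are asserted rather than spelled out. Your careful handling of the one-dimensional Jordan blocks (grouping them by joint eigenvalue before invoking Schur) is exactly the point that makes the isotypic route work; the paper's construction sidesteps this by building the blocks directly from the joint eigenvectors of the $A_iB_i$.
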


\begin{proof}This proof is a direct generalization of that of Corollary 7.1 proven in Ref. \cite{IMO+20}.

First, let us notice that Eq. (\ref{commutApp}) implies that $[A_iB_i, A_jB_j]=0$ for any $i,j=1,2,3$, which means that all three Hermitian operators $A_iB_i$ can be jointly diagonalised. 

Let then $\ket{\alpha,\beta,\gamma}$ be an eigenvector of 
these operators such that $A_1B_1\ket{\alpha,\beta,\gamma}=\alpha\ket{\alpha,\beta, \gamma}$
and $A_2B_2\ket{\alpha,\beta,\gamma}=\beta\ket{\alpha,\beta,\gamma}$ and $A_3B_3\ket{\alpha,\beta,\gamma}=\gamma\ket{\alpha,\beta,\gamma}$.
Define the following vectors 
$\ket{\bar\alpha,\beta,\gamma}=A_1\ket{\alpha,\beta,\gamma}$,
$\ket{\alpha,\bar\beta, \gamma}=A_2\ket{\alpha,\beta,\gamma}$, $\ket{\alpha,\beta, \bar\gamma}=A_3\ket{\alpha,\beta,\gamma}$,
$\ket{\bar\alpha,\bar\beta,\gamma}=A_1A_2\ket{\alpha,\beta,\gamma}$, $\ket{\bar\alpha,\beta,\bar\gamma}=A_1A_3\ket{\alpha,\beta,\gamma}$,
$\ket{\alpha,\bar\beta,\bar\gamma}=A_2A_3\ket{\alpha,\beta,\gamma}$
and
$\ket{\bar\alpha,\bar\beta,\bar\gamma}=A_1A_2A_3\ket{\alpha,\beta,\gamma}$.
Then, the subspace 
\begin{eqnarray}
 \mathrm{span}\{\ket{\alpha,\beta,\gamma}, \ket{\bar\alpha,\beta,\gamma}, \ket{\alpha,\bar\beta,\gamma},  \ket{\alpha,\beta,\bar\gamma},\nonumber\\ \ket{\bar\alpha,\bar\beta,\gamma},
\ket{\bar\alpha,\beta,\bar\gamma},\ket{\alpha,\bar\beta,\bar\gamma},\ket{\bar\alpha,\bar\beta,\bar\gamma}\}   
\end{eqnarray}
is isomorphic to $\mathrm{span}\{\ket{\alpha}, \ket{\bar\alpha}\}\otimes\mathrm{span}\{\ket{\beta}, 
\ket{\bar\beta}\} \otimes\mathrm{span}\{\ket{\gamma}, \ket{\bar\gamma}\} $.
It follows from  Lemma~\ref{lemma: Jordan} that
both $A_1, B_{1}$, both $A_2$, $B_{2}$ as well as both $A_3$, $B_{3}$  
act invariantly on the first, second and third tensor factors,
and trivially on the others, respectively.
\end{proof}



The above corollary can be trivially generalized to any $n \ge 3$: 
let $A_i$ and $B_{j}$ $(i,j=1,\ldots,n)$ be Hermitian operators acting on $\mathcal{H}_n$ that square to the identity and satisfy 
\begin{equation}
    [A_i,A_j]=[B_i,B_j]=0\qquad  (i\neq j).
\end{equation}
Then, $\cH_n$ decomposes
as 
$\cH_n=\bigoplus_l (\cH_l^1\otimes\cH_l^2  \cdots \otimes \cH_l^n )$,
with $\dim\cH_{l}^i\leq 2$, and
\begin{equation}
    A_j=\bigoplus_l\left(\bigotimes_{i=1}^{j-1}\mathbbm{1}_l\otimes A_j^l\otimes\bigotimes_{i=j+1}^{n}\mathbbm{1}_l\right)
\end{equation}
and
\begin{equation}
    B_j=\bigoplus_l\left(\bigotimes_{i=1}^{j-1}\mathbbm{1}_l\otimes B_j^l\otimes\bigotimes_{i=j+1}^{n}\mathbbm{1}_l\right).
\end{equation}


\end{document}